\documentclass[11pt]{article}
\usepackage{graphicx,amsmath,amsbsy,amssymb,amsthm,algorithmic,algorithm,cite,citesort,url}

\usepackage[usenames]{color}

\hoffset=0in \voffset=0in \evensidemargin=0in \oddsidemargin=0in
\textwidth=6.5in \topmargin=0in \headheight=0.0in \headsep=0.0in
\textheight=9in

\newtheorem{thm}{Theorem}[section] 
\newtheorem{lemma}{Lemma}[section] 
\newtheorem{cor}{Corollary}[section]
\newtheorem{prop}{Proposition}[section]

\newcommand{\supp}{\mathrm{supp}}

\newcommand{\argmin}{\mathop{\mathrm{arg \ min}}}

\newcommand{\complex}{\mathbb{C}}

\newcommand{\cS}{\mathcal{S}}
\newcommand{\cP}{\mathcal{P}}
\newcommand{\cR}{\mathcal{R}}

\newcommand{\bD}{\boldsymbol{D}}

\newcommand{\br}{\boldsymbol{r}}
\newcommand{\bx}{\boldsymbol{x}}

\newcommand{\balpha}{\boldsymbol{\alpha}}
\newcommand{\bA}{\boldsymbol{A}}
\newcommand{\by}{\boldsymbol{y}}

\newcommand{\be}{\boldsymbol{e}}

\newcommand{\bh}{\widetilde{\boldsymbol{v}}}

\newcommand{\bz}{\boldsymbol{z}}

\newcommand{\bv}{\boldsymbol{v}}

\newcommand{\bI}{\boldsymbol{I}}

\newcommand{\defby}{\overset{\mathrm{\scriptscriptstyle{def}}}{=}}
\newcommand{\vtil}{\widetilde{\bv}}


\newcommand{\norm}[2][2]{\left\| #2 \right\|_{#1}}
\newcommand{\norml}[2][2]{\left\| #2 \right\|_{1}}

\newlength{\imgwidth}
\setlength{\imgwidth}{3.125in}

\pagestyle{plain}

\title{{\bf Signal Space CoSaMP for Sparse Recovery with Redundant Dictionaries}}

\author{Mark A. Davenport$^g$, Deanna Needell$^c$, and Michael B. Wakin$^m$\thanks{
This work was partially supported by NSF grants DMS-1004718 and CCF-0830320, NSF CAREER grant CCF-1149225, and AFOSR grant FA9550-09-1-0465.}\\[3mm]
\small $^g$School of Electrical and Computer Engineering, Georgia Institute of Technology, \\[-1mm]
\small 777 Atlantic Dr. NW, Atlanta, GA 30332, USA. Email: mdav@gatech.edu\\[1mm]
\small $^c$Department of Mathematics and Computer Science, Claremont McKenna College, \\[-1mm]
\small 850 Columbia Ave., Claremont, CA 91711, USA. Email: dneedell@cmc.edu \\[1mm]
\small $^m$Department of Electrical Engineering and Computer Science, Colorado School of Mines, \\[-1mm]
\small 1500 Illinois St., Golden, CO 80401, USA. Email: mwakin@mines.edu
}

\date{August 2012 (Revised February 2013, June 2013)}

\begin{document}
\maketitle

\begin{abstract}
Compressive sensing (CS) has recently emerged as a powerful framework for acquiring sparse signals.  The bulk of the CS literature has focused on the case where the acquired signal has a sparse or compressible representation in an orthonormal basis.  In practice, however, there are many signals that cannot be sparsely represented or approximated using an orthonormal basis, but that {\em do} have sparse representations in a redundant dictionary.  Standard results in CS can sometimes be extended to handle this case provided that the dictionary is sufficiently incoherent or well-conditioned, but these approaches fail to address the case of a truly redundant or overcomplete dictionary.  In this paper we describe a variant of the iterative recovery algorithm CoSaMP for this more challenging setting.  We utilize the $\bD$-RIP, a condition on the sensing matrix analogous to the well-known restricted isometry property.  In contrast to prior work, the method and analysis are ``signal-focused''; that is, they are oriented around recovering the {\em signal} rather than its dictionary {\em coefficients}.  Under the assumption that we have a near-optimal scheme for projecting vectors in signal space onto the model family of candidate sparse signals, we provide provable recovery guarantees.  Developing a practical algorithm that can provably compute the required near-optimal projections remains a significant open problem, but we include simulation results using various heuristics that empirically exhibit superior performance to traditional recovery algorithms.
\end{abstract}

\section{Introduction}

\subsection{Overview}

Compressive sensing (CS) is a powerful new framework for signal acquisition, offering the promise that we can acquire a vector $\bx\in\complex^n$ via only $m \ll n$ linear measurements provided that $\bx$ is {\em sparse} or {\em compressible}.\footnote{When we say that a vector $\bz$ is $k$-sparse, we mean that $\norm[0]{\bz} \defby |\supp(\bz)| \leq k \ll n$.  A compressible vector is one that is well-approximated as being sparse.  We discuss compressibility in greater detail in Section~\ref{ssec:compress}.}  Specifically, CS considers the problem where we obtain measurements of the form $\by = \bA\bx + \be$, where $\bA$ is an $m\times n$ sensing matrix and $\be$ is a noise vector.  If $\bx$ is sparse or compressible and $\bA$ satisfies certain conditions, then CS provides a mechanism to recover the signal $\bx$ from the measurement vector $\by$ efficiently and robustly.

Typically, however, signals of practical interest are not themselves sparse, but rather have a sparse expansion in some dictionary $\bD$. By this we mean that there exists a sparse coefficient vector $\balpha$ such that the signal $\bx$ can be expressed as $\bx = \bD \balpha$. One could then ask the simple question: How can we account for this signal model in CS? In some cases, there is a natural way to extend the standard CS formulation---since we can write the measurements as $\by = \bA\bD\balpha + \be$ we can use standard CS techniques to first obtain an estimate $\widehat{\balpha}$ of the sparse coefficient vector.  We can then synthesize an estimate $\widehat{\bx} = \bD \widehat{\balpha}$ of the original signal. Unfortunately, this is a rather restrictive way to proceed for two main reasons: (i) the application of standard CS results to this problem will require that the matrix given by the product $\bA \bD$ satisfy certain properties that will not be satisfied for many interesting choices of $\bD$, as discussed further below, and (ii) we are not really interested in recovering $\balpha$ {\em per se}, but rather in obtaining an accurate estimate of $\bx$.  If the dictionary $\bD$ is poorly conditioned, the signal space recovery error $\norm{\bx - \widehat{\bx}}$ could be significantly smaller or larger than the coefficient space recovery error $\norm{\balpha - \widehat{\balpha}}$. It may be possible to recover $\bx$ in situations where recovering $\balpha$ is impossible, and even if we could apply standard CS results to ensure that our estimate of $\balpha$ is accurate, this would not necessarily translate into a recovery guarantee for $\bx$.

In this paper we will consider an alternative approach to this problem and develop an algorithm for which we can provide guarantees on the recovery of $\bx$ while making no direct assumptions concerning our choice of $\bD$.  Before we describe our approach, however, it will be illuminating to see precisely what goes wrong in an attempt to extend the standard CS formulation.  Towards this end, let us return to the case where $\bx$ is itself sparse (when $\bD = \bI$).  In this setting, there are many possible algorithms that have been proposed for recovering an estimate of $\bx$ from measurements of the form $\by = \bA \bx + \be$, including $\ell_1$-minimization approaches\cite{donoho2006compressed,Cande_Restricted} and greedy/iterative methods such as iterative hard thresholding (IHT)~\cite{BlumeD_Iterative}, orthogonal matching pursuit (OMP)~\cite{PatiRK_Orthogonal,MallaZ_Matching,ZhangOMP} and compressive sampling matching pursuit (CoSaMP)~\cite{NeedeT_CoSaMP}. For any of these algorithms, it can be shown (see~\cite{Cande_Restricted,BlumeD_Iterative,NeedeT_CoSaMP,DavenW_Analysis}) that $\bx$ can be accurately recovered from the measurements $\by$ if the matrix $\bA$ satisfies a condition introduced in~\cite{CandeT_Decoding} known as the restricted isometry property (RIP) with a sufficiently small constant $\delta_k$ (the precise requirement on $\delta_k$ varies from method to method).  We say that $\bA$ satisfies the RIP of order $k$ with constant  $\delta_k \in (0,1)$ if
\begin{equation}
\label{eq:ARIP} \sqrt{1-\delta_k} \le \frac{\norm{\bA \bx}}{\norm{\bx}} \le \sqrt{1+\delta_k}
\end{equation}
holds for all $\bx$ satisfying $\norm[0]{\bx} \le k$.  Importantly, if $\bA$ is generated randomly with independent and identically distributed (i.i.d.)\ entries drawn from a suitable distribution, and with a number of rows roughly proportional to the sparsity level $k$, then with high probability $\bA$ will satisfy~\eqref{eq:ARIP}~\cite{RV06:Sparse-Reconstruction,BaranDDW_Simple,mendelson2008uniform}.

We now return to the case where $\bx$ is sparse with respect to a dictionary $\bD$.  If $\bD$ is unitary (i.e., if the dictionary is an orthonormal basis), then the same arguments used to establish~\eqref{eq:ARIP} can be adapted to show that for a fixed $\bD$, if we choose a random $\bA$, then with high probability $\bA \bD$ will satisfy the RIP, i.e.,
\begin{equation}
\label{eq:ADRIP} \sqrt{1-\delta_k} \le \frac{\norm{\bA \bD \balpha}}{\norm{\balpha}} \le \sqrt{1+\delta_k}
\end{equation}
will hold for all $\balpha$ satisfying $\norm[0]{\balpha} \le k$. Thus, standard CS algorithms can be used to accurately recover $\balpha$, and because $\bD$ is unitary, the signal space recovery error $\norm{\bx - \widehat{\bx}}$ will exactly equal the coefficient space recovery error $\norm{\balpha - \widehat{\balpha}}$.

Unfortunately, this approach won't do in cases where $\bD$ is not unitary and especially in cases where $\bD$ is highly redundant/overcomplete.  For example, $\bD$ might represent the overcomplete Discrete Fourier Transform (DFT), the Undecimated Discrete Wavelet Transform, a redundant Gabor dictionary, or a union of orthonormal bases.  The challenges that we must confront when dealing with $\bD$ of this form include:
\begin{itemize}
\item Redundancy in $\bD$ will mean that in general, the representation of a vector $\bx$ in the dictionary is not unique---there may exist many possible coefficient vectors $\balpha$ that can be used to synthesize $\bx$.
\item Coherence (correlations) between the columns of $\bD$ can make it difficult to satisfy (\ref{eq:ADRIP}) with a sufficiently small constant $\delta_k$ to apply existing theoretical guarantees.  For instance, while the DFT forms an orthonormal basis, a $2 \times$ overcomplete DFT is already quite coherent, with adjacent columns satisfying $| \langle d_i, d_{i+1} \rangle | > 2/\pi > 0.63$. Since the coherence provides a bound on $\delta_k$ for all $k\ge2$, this means that $\bD$ itself cannot satisfy the RIP with a constant $\delta_{2k} < 0.63$. For the random constructions of $\bA$ typically considered in the context of CS, with high probability $\bA$ will preserve the conditioning of $\bD$ (good or bad) on each subspace of interest.  (For such $\bA$, one essentially needs $\bD$ itself to satisfy the RIP in order to expect $\bA \bD$ to satisfy the RIP.)  Thus, in the case of the $2 \times$ overcomplete DFT, we would expect the RIP constant for $\bA \bD$ to be {\em at least} roughly $0.63$---well outside the range for which any of the sparse recovery algorithms described above are known to succeed.
\item As noted above, if the dictionary $\bD$ is poorly conditioned, the signal space recovery error $\norm{\bx - \widehat{\bx}}$ could differ substantially from the coefficient space recovery error $\norm{\balpha - \widehat{\balpha}}$, further complicating any attempt to understand how well we can recover $\bx$ by appealing to results concerning the recovery of $\balpha$.
\end{itemize}
All of these problems essentially stem from the fact that extending standard CS algorithms in an attempt to recover $\balpha$ is a {\em coefficient-focused} recovery strategy. By trying to go from the measurements $\by$ all the way back to the coefficient vector $\balpha$, one encounters all the problems above due to the lack of orthogonality of the dictionary.

In contrast, in this paper we propose a {\em signal-focused} recovery strategy for CS. Our algorithm employs the model of sparsity in an arbitrary dictionary $\bD$ but directly obtains an estimate of the signal $\bx$, and we provide guarantees on the quality of this estimate in signal space. Our algorithm is a modification of CoSaMP~\cite{NeedeT_CoSaMP}, and in cases where $\bD$ is unitary, our ``Signal-Space CoSaMP'' algorithm reduces to standard CoSaMP.  However, our analysis requires comparatively weaker assumptions.  Our bounds require only that $\bA$ satisfy the $\bD$-RIP~\cite{CandeENR_Compressed} (which we explain below in Section~\ref{sec:assumptions})---this is a different and less-restrictive condition to satisfy than requiring $\bA \bD$ to satisfy the RIP.  The algorithm does, however, require the existence of a near-optimal scheme for projecting a vector $\bx$ onto the set of signals admitting a sparse representation in $\bD$.  While the fact that we require only an {\em approximate} projection is a significant relaxation of the requirements of traditional algorithms like CoSaMP (which require {\em exact} projections), showing that a practical algorithm can provably compute the required near-optimal projection remains a significant open problem.  Nevertheless, as we will see in Section~\ref{sec:sims}, various practical algorithms do lead to empirically favorable performance, suggesting that this challenge might not be insurmountable.

\subsection{Related Work}

Our work most closely relates to Blumensath's Projected Landweber Algorithm (PLA)~\cite{blumensath2011sampling}, an extension of Iterative Hard Thresholding (IHT)~\cite{BlumeD_Iterative} that operates in signal space and accounts for a union-of-subspaces signal model. In several ways, our work is a parallel of this one, except that we extend CoSaMP rather than IHT to operate in signal space. Both works assume that $\bA$ satisfies the $\bD$-RIP, and implementing both algorithms requires the ability to compute projections of vectors in the signal space onto the model family. (These requirements are described more thoroughly in Section~\ref{sec:assumptions} below.) One critical difference, however, is that our analysis allows for near-optimal projections whereas the PLA analysis does not.\footnote{Technically, the analysis of the PLA~\cite{blumensath2011sampling} allows for near-optimal projections but only with an additive error term. For sparse models, however, such projections could be made arbitrarily accurate simply by rescaling the signal before projecting. Thus, we consider the PLA to require exact projections for sparse models.} Other fine differences are noted below.

Also related are works that employ an assumption of ``analysis sparsity,'' in which a signal $\bx$ is analyzed in a dictionary $\bD$, and recovery from CS measurements is possible if $\bD^\ast \bx$ is sparse or compressible. Conventional CS algorithms such as $\ell_1$-minimization~\cite{CandeENR_Compressed,nam2011cosparse}, IHT~\cite{cevher2011alps,giryes2012greedy}, and CoSaMP~\cite{giryes2012greedy} have been adapted to account for analysis sparsity. These works are similar to ours in that they provide recovery guarantees in signal space and do not require $\bA \bD$ to satisfy the RIP. However, the assumption of analysis sparsity is in general different from the ``synthesis sparsity'' that we assume, where there exists a sparse coefficient vector $\balpha$ such that $\bx = \bD \balpha$.  For example, in the analysis case, exact sparsity implies that the analysis vector $\bD^*\bx$ is sparse, whereas in our setting exact sparsity implies the coefficient vector $\balpha$ is.  Under both of these assumptions, both the $\ell_1$-analysis method and our Signal Space CoSaMP algorithm provide recovery guarantees proportional to the norm of the noise in the measurements, $\norm{\be}$\cite{CandeENR_Compressed}.  Without exact sparsity, $\ell_1$-analysis adds an additional factor $\|\bD^*\bx - (\bD^*\bx)_k\|_1 / \sqrt{k}$, where $(\bD^*\bx)_k$ represents the $k$ largest coefficients in magnitude of $(\bD^*\bx)$.  In the synthesis sparsity setting, the analogous ``tail-term'' is less straightforward (see Section~\ref{ssec:compress} below for details).  In summary, these algorithms are intended for different signal families and potentially different dictionaries. Nevertheless, there are some similarities between our work and analysis CoSaMP (ACoSaMP)~\cite{giryes2012greedy} as we will see below.

Finally, it is worth mentioning the loose connection between our work and that in ``model-based CS''~\cite{baraniuk2010model}. In the case where $\bD$ is unitary, IHT and CoSaMP have been modified to account for structured sparsity models, in which certain sparsity patterns in the coefficient vector $\balpha$ are forbidden. This work is similar to ours in that it involves a projection onto a model set. However, the algorithm (including the projection) operates in coefficient space (not signal space) and employs a different signal model; the requisite model-based RIP is more similar to requiring that $\bA \bD$ satisfy the RIP; and extensions to non-orthogonal dictionaries are not discussed.  In fact, our work is in part inspired by our recent efforts~\cite{davenport2012compressive} to extend the ``model-based CS'' framework to a non-orthogonal dictionary in which we proposed a similar algorithm to the one considered in this paper.

\subsection{Requirements}
\label{sec:assumptions}

First, to establish notation, suppose that $\bA$ is an $m \times n$ matrix and $\bD$ is an arbitrary $n \times d$ matrix. We suppose that we observe measurements of the form $\by = \bA \bx + \be = \bA \bD \balpha + \be.$ For an index set $\Lambda \subset \{1,2,\dots,d\}$ (sometimes referred to as a {\em support set}), we let $\bD_\Lambda$ denote the $n \times |\Lambda|$ submatrix of $\bD$ corresponding to the columns indexed by $\Lambda$, and we let $\cR(\bD_\Lambda)$ denote the column span of $\bD_\Lambda$. We also use $\cP_{\Lambda}: \complex^n \rightarrow \complex^n$ to denote the orthogonal projection operator onto $\cR(\bD_\Lambda)$ and $\cP_{\Lambda^\bot}: \complex^n \rightarrow \complex^n$ to denote the orthogonal projection operator onto the orthogonal complement of $\cR(\bD_\Lambda)$.\footnote{Note that  $\cP_{\Lambda^\bot}$ does {\em not} represent the orthogonal projection operator onto $\cR(\bD_{\{1,2,\dots,d\} \backslash \Lambda})$.}

We will approach our analysis under the assumption that the matrix $\bA$ satisfies the $\bD$-RIP~\cite{CandeENR_Compressed}. Specifically, we say that $\bA$ satisfies the $\bD$-RIP of order $k$ if there exists a constant $\delta_k \in (0,1)$ such that
\begin{equation}
\label{eq:D-RIP} \sqrt{1-\delta_k} \le \frac{\norm{\bA \bD \balpha}}{\norm{\bD \balpha}} \le \sqrt{1+\delta_k}
\end{equation}
holds for all $\balpha$ satisfying $\norm[0]{\balpha} \le k$. We note that this is different from requiring that $\bA$ satisfy the RIP---although (\ref{eq:ARIP}) and (\ref{eq:D-RIP}) appear similar, the RIP requirement demands that this condition holds for vectors $\bx$ containing few nonzeros, while the $\bD$-RIP requirement demands that this condition holds for vectors $\bx$ having a sparse representation in the dictionary $\bD$. We also note that, compared to the requirement that $\bA \bD$ satisfy the RIP (\ref{eq:ADRIP}), it is relatively easy to ensure that $\bA$ satisfies the $\bD$-RIP. In particular, we have the following lemma.
\begin{lemma}[Corollary 3.1 of~\cite{davenport2012compressive}]
For any choice of $\bD$, if $\bA$ is populated with i.i.d.\ random entries from a Gaussian or subgaussian distribution, then with high probability, $\bA$ will satisfy the $\bD$-RIP of order $k$ as long as $m = O(k \log(d/k))$.
\end{lemma}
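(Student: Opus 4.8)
The plan is to exploit the fact that the $\bD$-RIP condition \eqref{eq:D-RIP} is really a statement about a \emph{union of subspaces}, and then to run the standard subgaussian RIP argument on this union. Specifically, observe that every vector of the form $\bD\balpha$ with $\norm[0]{\balpha} \le k$ lies in the subspace $\cR(\bD_\Lambda)$ for some support set $\Lambda \subset \{1,\dots,d\}$ with $|\Lambda| = k$, and each such $\cR(\bD_\Lambda)$ has dimension at most $k$. There are at most $\binom{d}{k}$ such support sets, so verifying \eqref{eq:D-RIP} amounts to showing that $\bA$ simultaneously acts as an approximate isometry (up to the factor $\sqrt{1 \pm \delta_k}$) on each of these at-most-$k$-dimensional subspaces.

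First I would establish the result on a single fixed subspace. For a fixed vector $\bv$, a standard concentration bound for (appropriately normalized) subgaussian matrices gives $\prob{\absval{\norm{\bA\bv}^2 - \norm{\bv}^2} > t\norm{\bv}^2} \le 2\exp(-c m t^2)$ for $t \in (0,1)$ and an absolute constant $c > 0$. To upgrade this pointwise bound to a uniform statement over an entire subspace $V = \cR(\bD_\Lambda)$, I would construct an $\epsilon$-net $\cN$ of the unit sphere of $V$; since $\dim V \le k$, such a net exists with $|\cN| \le (3/\epsilon)^k$. Applying the concentration bound to each net point and taking a union bound shows that $\bA$ nearly preserves the norm of every net point with high probability, and a routine continuity/approximation argument then extends this to all of $V$, at the cost of a mild inflation of the isometry constant that can be absorbed by choosing $\epsilon$ and $t$ appropriately in terms of $\delta_k$.

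Next I would take a union bound over all $\binom{d}{k}$ choices of $\Lambda$. The total number of failure events is then at most $\binom{d}{k}(3/\epsilon)^k$, and using $\binom{d}{k} \le (ed/k)^k$ its logarithm is on the order of $k\log(d/k)$. For the union bound to still yield an exponentially small failure probability, it suffices that $c\, m\, \delta_k^2$ dominate this logarithm, i.e., that $m = O(k\log(d/k))$, with the hidden constant depending on $\delta_k$. This yields the claimed sample complexity.

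The main obstacle, and the only place the argument requires genuine care, is the passage from the net to the full subspace: the pointwise concentration controls only the net points, and one must verify that the error incurred in extending to arbitrary unit vectors of $V$ does not overwhelm the target isometry constant. This is handled by the usual geometric-series bounding of the residual, which pins down the relationship between $\epsilon$, $t$, and $\delta_k$. Crucially, because $\bD$ is fixed and the only randomness is in $\bA$, the concentration bound applies verbatim to each fixed subspace $\cR(\bD_\Lambda)$ regardless of the conditioning of $\bD$ on that subspace; this is precisely what makes the $\bD$-RIP so much easier to satisfy than the requirement that $\bA\bD$ satisfy the ordinary RIP.
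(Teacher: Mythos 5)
Your argument is correct: the paper does not prove this lemma itself but imports it by citation from~\cite{davenport2012compressive}, and the proof there is exactly the route you describe --- view the $\bD$-RIP as an approximate isometry condition on the union of the at-most-$k$-dimensional subspaces $\cR(\bD_\Lambda)$, apply subgaussian concentration to an $\epsilon$-net of each unit sphere, and union bound over the $\binom{d}{k}$ supports to get $m = O(k\log(d/k))$. Your closing observation --- that the randomness of $\bA$ acts on each fixed subspace regardless of how badly conditioned $\bD$ is there, which is precisely why the $\bD$-RIP is easier to satisfy than the RIP for $\bA\bD$ --- is also the key point the paper emphasizes.
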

\noindent In fact, using the results of~\cite{KW:JLRIP} one can extend this result to show that given any matrix $\bA$ satisfying the traditional RIP, by applying a random sign matrix one obtains a matrix that with high probability will satisfy the $\bD$-RIP.

Next, recall that one of the key steps in the traditional CoSaMP algorithm is to project a vector in signal space onto the model family of candidate sparse signals. In the traditional setting (when $\bD$ is an orthonormal basis), this step is trivial and can be performed by simple thresholding of the entries of the coefficient vector. Our Signal Space CoSaMP algorithm (described more completely in Section~\ref{sec:alg}) involves replacing thresholding with a more general projection of vectors in the signal space onto the signal model. Specifically, for a given vector $\bz \in \complex^n$ and a given sparsity level $k$, define
$$
\Lambda_{\text{opt}}(\bz,k) := \argmin_{\Lambda: |\Lambda| = k} \norm{\bz - \cP_{\Lambda} \bz}.
$$
The support $\Lambda_{\text{opt}}(\bz,k)$---if we could compute it---could be used to generate the best $k$-sparse approximation to $\bz$; in particular, the nearest neighbor to $\bz$ among all signals that can be synthesized using $k$ columns from $\bD$ is given by $\cP_{\Lambda_{\text{opt}}(\bz,k)} \bz$. Unfortunately, computing $\Lambda_{\text{opt}}(\bz,k)$ may be difficult in general. Therefore, we allow for near-optimal projections to be used in our algorithm. For a given vector $\bz \in \complex^n$ and a given sparsity level $k$, we assume a method is available for producing an estimate of $\Lambda_{\text{opt}}(\bz,k)$, denoted $\cS_{\bD}(\bz,k)$ and having cardinality $| \cS_{\bD}(\bz,k) | = k$, that satisfies
\begin{equation}
\label{eq:approxProj}
\norm{ \cP_{\Lambda_{\text{opt}}(\bz,k)} \bz - \cP_{ \cS_{\bD}(\bz,k) } \bz } \le \min\left( \epsilon_1 \norm{ \cP_{\Lambda_{\text{opt}}(\bz,k)} \bz }, \epsilon_2 \norm{ \bz - \cP_{\Lambda_{\text{opt}}(\bz,k)} \bz }\right)
\end{equation}
for some constants $\epsilon_1, \epsilon_2 \ge 0$. Setting $\epsilon_1$ or $\epsilon_2$ equal to $0$ would lead to the requirement that $\cP_{\Lambda_{\text{opt}}(\bz,k)} \bz = \cP_{ \cS_{\bD}(\bz,k) } \bz$ exactly. Note that our metric for judging the quality of an approximation to $\Lambda_{\text{opt}}(\bz,k)$ is entirely in terms of its impact in signal space.  It might well be the case that $\cS_{\bD}(\bz,k)$ could satisfy~\eqref{eq:approxProj} while being substantially different (or even disjoint) from $\Lambda_{\text{opt}}(\bz,k)$.  Thus, while computing $\Lambda_{\text{opt}}(\bz,k)$ may be extremely challenging when $\bD$ is highly redundant, there is hope that efficiently computing an approximation that satisfies~\eqref{eq:approxProj} might still be possible.  However, determining whether this is the case remains an open problem.

It is important to note that although computing a near-optimal support estimate that satisfies the condition (\ref{eq:approxProj}) remains a challenging task in general, several important related works have run into the same problem. As we previously mentioned, the existing analysis of the PLA~\cite{blumensath2011sampling} actually requires exact computation of $\Lambda_{\text{opt}}(\bz,k)$. The analysis of ACoSaMP~\cite{giryes2012greedy} allows a near-optimal projection to be used, with a near-optimality criterion that differs slightly from ours. Simulations of ACoSaMP, however, have relied on practical (but not theoretically backed) methods for computing this projection. In Section~\ref{sec:sims}, we present simulation results for Signal Space CoSaMP using practical (but not theoretically backed) methods for computing $\cS_{\bD}(\bz,k)$.  We believe that computing provably near-optimal projections is an important topic worthy of further study, as it is really the crux of the problem in all of these settings.

\section{Algorithm and Recovery Guarantees}
\label{sec:alg}

Given noisy compressive measurements of the form $\by = \bA \bx + \be$, our Signal Space CoSaMP algorithm for recovering an estimate of the signal $\bx$ is specified in Algorithm~\ref{alg:cosampModified}.

\begin{algorithm}[t]
\caption{Signal Space CoSaMP} \label{alg:cosampModified}
\begin{algorithmic}
\STATE \textbf{input:} $\bA$, $\bD$, $\by$, $k$, stopping criterion
\STATE \textbf{initialize:} $\br = \by$, $\bx^0 = 0$, $\ell = 0$, $\Gamma=\emptyset$
\WHILE{not converged}
\STATE
\begin{tabular}{ll}
\textbf{proxy:} & $\bh = \bA^\ast \br $ \\
\textbf{identify:} & $\Omega = \cS_{\bD}(\bh,2k)$ \\
\textbf{merge:} & $T = \Omega \cup \Gamma$ \\
\textbf{update:} & $\widetilde{\bx} = \argmin_{\bz}  \norm{\by - \bA \bz} \quad \mathrm{s.t.} \quad \bz \in \cR(\bD_T)$ \\
 & $ \Gamma = \cS_{\bD}(\widetilde{\bx},k)$ \\
 & $ \bx^{\ell+1} = \cP_{ \Gamma } \widetilde{\bx} $ \\
 & $\br = \by - \bA \bx^{\ell+1}$   \\
 & $\ell = \ell+1$
\end{tabular}
\ENDWHILE
\STATE \textbf{output:} $\widehat{\bx} = \bx^{\ell}$
\end{algorithmic}
\end{algorithm}

\subsection{A Bound for the Recovery of Sparse Signals}

For signals having a sparse representation in the dictionary $\bD$, we have the following guarantee.

\begin{thm}\label{thm:exactSparse}
Suppose there exists a $k$-sparse coefficient vector $\balpha$ such that $\bx = \bD \balpha$, and suppose that $\bA$ satisfies the $\bD$-RIP of order $4k$. Then the signal estimate $\bx^{\ell+1}$ obtained after $\ell+1$ iterations of Signal Space CoSaMP satisfies
\begin{align}
\label{eq:iterativeExactSparse}
\norm{\bx - \bx^{\ell+1}} &\le C_1 \norm{\bx - \bx^{\ell}} + C_2 \norm{\be},
\end{align}
where $C_1$ and $C_2$ are constants that depend on the isometry constant $\delta_{4k}$ and on the approximation parameters $\epsilon_1$ and $\epsilon_2$. In particular,
$$
C_1 = ((2 + \epsilon_1) \delta_{4k} + \epsilon_1) (2+\epsilon_2) \sqrt{ \frac{1+\delta_{4k}}{1-\delta_{4k}}} \quad \text{and} \quad
C_2 = \left( \frac{(2+\epsilon_2)\left((2+\epsilon_1) (1 + \delta_{4k}) + 2\right)}{\sqrt{1-\delta_{4k}}} \right).
$$
\end{thm}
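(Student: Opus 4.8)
The plan is to mirror the four stages of the ordinary CoSaMP iteration --- identify, merge, least-squares estimate, prune --- but to carry out every bound in signal space, invoking the near-optimal projection hypothesis \eqref{eq:approxProj} at the identify and prune stages. Write $\bv = \bx - \bx^\ell$ for the current error signal, let $\Lambda = \supp(\balpha)$, and let $\Gamma$ denote the support from the previous iteration, so $\bx^\ell \in \cR(\bD_{\Gamma})$. Two bookkeeping facts drive everything. First, $\bv \in \cR(\bD_{\Lambda \cup \Gamma})$ is $2k$-sparse, and every quantity arising as a difference of two iterates or projections lives in the span of at most $4k$ columns of $\bD$ --- which is exactly why the $\bD$-RIP is assumed to order $4k$. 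Second, since $\Gamma \subseteq T$ we have $\cP_{T}\bx^\ell = \bx^\ell$, hence $\norm{\bx - \cP_{T}\bx} = \norm{\cP_{T^\bot}\bv}$, and since $\Omega \subseteq T$ we get $\norm{\cP_{T^\bot}\bv} \le \norm{\cP_{\Omega^\bot}\bv}$. The residual feeding the proxy is $\br = \bA\bv + \be$, so $\bh = \bA^\ast\bA\bv + \bA^\ast\be$.

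\textbf{The three quantitative steps and how they chain.} I would isolate three estimates whose constants multiply to $C_1$ and combine to $C_2$. \emph{(i) Pruning.} Since $\bx \in \cR(\bD_\Lambda)$ is a feasible $k$-sparse approximation of $\widetilde\bx$, the optimal $k$-term error obeys $\norm{\widetilde\bx - \cP_{\Lambda_{\text{opt}}(\widetilde\bx,k)}\widetilde\bx} \le \norm{\widetilde\bx - \bx}$; feeding this into the $\epsilon_2$ branch of \eqref{eq:approxProj} and using $\bx^{\ell+1} = \cP_{\cS_{\bD}(\widetilde\bx,k)}\widetilde\bx$ yields $\norm{\bx - \bx^{\ell+1}} \le (2+\epsilon_2)\norm{\bx - \widetilde\bx}$. \emph{(ii) Estimation.} Because $\bx - \widetilde\bx$ is $4k$-sparse, the lower $\bD$-RIP bound gives $\norm{\bx - \widetilde\bx} \le (1-\delta_{4k})^{-1/2}\norm{\bA(\bx - \widetilde\bx)}$; writing $\bA\widetilde\bx$ as the orthogonal projection of $\by$ onto $\bA\cR(\bD_T)$ and splitting $\bx = \cP_{T}\bx + \cP_{T^\bot}\bx$ collapses the right-hand side to $\norm{\bA\cP_{T^\bot}\bx} + \norm{\be}$, and the upper $\bD$-RIP bound then gives $\norm{\bx - \widetilde\bx} \le \sqrt{\tfrac{1+\delta_{4k}}{1-\delta_{4k}}}\,\norm{\cP_{T^\bot}\bx} + (1-\delta_{4k})^{-1/2}\norm{\be}$. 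Bounding $\bx - \widetilde\bx$ directly, rather than routing through $\cP_{T}\bx$, is what removes a spurious additive factor and reproduces the stated $\sqrt{(1+\delta_{4k})/(1-\delta_{4k})}$. \emph{(iii) Identification.} It then remains to show $\norm{\cP_{T^\bot}\bx} = \norm{\cP_{\Omega^\bot}\bv}$ is at most $[(2+\epsilon_1)\delta_{4k} + \epsilon_1]\norm{\bv}$ plus a multiple of $\norm{\be}$. Substituting (iii) into (ii) into (i) and tidying the noise coefficients is a routine calculation yielding $C_1$ and $C_2$.

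\textbf{The main obstacle.} The hard part is step (iii). Setting $\bw = \cP_{\Omega^\bot}\bv$, one has $\norm{\bw}^2 = \iprod{\bw}{\bv}$, and expanding through $\bA^\ast\bA$ reduces the task to controlling $\iprod{\bw}{\bh}$ by $\delta_{4k}\norm{\bw}\norm{\bv}$ plus noise, the leftover terms $\iprod{\bA\bw}{\be}$ and $\iprod{(\bA^\ast\bA-\bI)\bw}{\bv}$ being $\bD$-RIP cross terms on $\le 4k$-sparse vectors. The genuine difficulty is that the proxy $\bh$ is a dense vector in $\complex^n$, so \eqref{eq:approxProj} can only be used in its captured-energy ($\epsilon_1$) form, comparing $\cP_{\Omega}\bh$ with $\cP_{\Lambda_{\text{opt}}(\bh,2k)}\bh$, and never through the uncontrolled residual $\norm{\bh - \cP_{\Omega}\bh}$. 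This is the signal-space surrogate for the entrywise ranking inequality $\norm{\bh|_{(\Lambda\cup\Gamma)\setminus\Omega}} \le \norm{\bh|_{\Omega\setminus(\Lambda\cup\Gamma)}}$ that drives the identification step of ordinary CoSaMP when $\bD = \bI$. The delicate point --- where I expect the real work to lie --- is extracting from \emph{energy}-based near-optimality an estimate that is \emph{linear} in $\delta_{4k}$: comparing captured energies through the Pythagorean identity alone only delivers a bound of order $\sqrt{\delta_{4k}}$, so obtaining $(2+\epsilon_1)\delta_{4k} + \epsilon_1$ requires a careful swap-type argument that plays $\Lambda_{\text{opt}}(\bh,2k)$ against the support $\Lambda \cup \Gamma$ of $\bv$ while respecting the non-orthogonality of the columns of $\bD$, and then pays the $\epsilon_1$-dependent price for using $\Omega$ in place of $\Lambda_{\text{opt}}(\bh,2k)$.
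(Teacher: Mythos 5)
Your architecture is the paper's: the four bounds you label (i) pruning, (ii) estimation, the merge identity $\norm{\cP_{T^\bot}\bx}=\norm{\cP_{T^\bot}\bv}\le\norm{\cP_{\Omega^\bot}\bv}$, and (iii) identification are exactly Lemmas~\ref{lem:statement4}, \ref{lem:statement3}, \ref{lem:statement2}, and \ref{lem:statement1}, and your proofs of (i) and of the merge step coincide with the paper's. (Your (ii) is in fact slightly sharper: by using that $\bA\widetilde{\bx}$ is the orthogonal projection of $\by$ onto $\bA\cR(\bD_T)$ you obtain a noise coefficient $1/\sqrt{1-\delta_{4k}}$ where the paper's triangle-inequality route gives $2/\sqrt{1-\delta_{4k}}$; this only shrinks $C_2$, so it is harmless.)

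The gap is step (iii), which you correctly flag as where the real work lies but do not carry out. The route you sketch --- writing $\norm{\cP_{\Omega^\bot}\bv}^2 = \iprod{\cP_{\Omega^\bot}\bv}{\bv}$ and trying to transplant the entrywise ranking inequality of standard CoSaMP --- is not how the coefficient $(2+\epsilon_1)\delta_{4k}+\epsilon_1$ is obtained, and your worry about only extracting $\sqrt{\delta_{4k}}$ from energy comparisons is a symptom of heading down a path that does not close. The paper's argument avoids inner products and ranking entirely. It starts from the nearest-point inequality $\norm{\cP_{\Omega^\bot}\bv} = \norm{\bv - \cP_\Omega\bv} \le \norm{\bv - \cP_\Omega\vtil}$ (valid because $\cP_\Omega\vtil \in \cR(\bD_\Omega)$), introduces $\Omega^* = \Lambda_{\text{opt}}(\vtil,2k)$ together with an auxiliary support $R = \cS_{\bD}(\bv,2k)$ satisfying $\cP_R\bv=\bv$ (possible since $\bv$ is $2k$-sparse in $\bD$), and splits $\bv-\cP_\Omega\vtil$ into three pieces: $\bv-\cP_{R\cup\Omega^*}\vtil$, then $\cP_{R\cup\Omega^*}\vtil-\cP_{\Omega^*}\vtil$, then $\cP_{\Omega^*}\vtil-\cP_\Omega\vtil$. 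The first is controlled by the operator-norm consequence of the $\bD$-RIP, $\norm{\cP_B\bA^*\bA\cP_B-\cP_B}\le\delta_{|B|}$ with $B=R\cup\Omega^*$ of size $4k$ --- this is where the dependence on $\delta_{4k}$ enters linearly, with no swap argument. The second is bounded by $\norm{\cP_{R\cup\Omega^*}\vtil-\cP_R\vtil}$ via a Pythagorean optimality argument for $\Omega^*$ among $2k$-subsets of $R\cup\Omega^*$, and then reduced to the same operator-norm quantity using $\cP_R\bv=\bv$. The third is exactly where the $\epsilon_1$ branch of \eqref{eq:approxProj} is used. Without an argument of this kind, your step (iii) --- and hence the value of $C_1$ and the $(2+\epsilon_1)(1+\delta_{4k})$ portion of $C_2$, whose noise coefficient you also leave unspecified --- remains unproven.
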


Our proof of Theorem~\ref{thm:exactSparse} appears in Appendix~\ref{sec:proofExactSparse} and is a modification of the original CoSaMP proof~\cite{NeedeT_CoSaMP}. Through various combinations of $\epsilon_1$, $\epsilon_2$, and $\delta_{4k}$, it is possible to ensure that $C_1 < 1$ and thus that the accuracy of Signal Space CoSaMP improves at each iteration. Taking $\epsilon_1 = \frac{1}{10}$, $\epsilon_2 = 1$, and $\delta_{4k} = 0.029$ as an example, we obtain $C_1 \le 0.5$ and $C_2 \le 12.7$. Applying the relation (\ref{eq:iterativeExactSparse}) recursively, we then conclude the following.

\begin{cor}\label{cor:exactSparse}
Suppose there exists a $k$-sparse coefficient vector $\balpha$ such that $\bx = \bD \balpha$, and suppose that $\bA$ satisfies the $\bD$-RIP of order $4k$ with $\delta_{4k} = 0.029$. Suppose that Signal Space CoSaMP is implemented using near-optimal projections with approximation parameters $\epsilon_1 = \frac{1}{10}$ and $\epsilon_2 = 1$. Then the signal estimate $\bx^{\ell}$ obtained after $\ell$ iterations of Signal Space CoSaMP satisfies
\begin{equation}
\label{eq:finalExactSparse}
\norm{\bx - \bx^{\ell}} \leq 2^{-\ell}\norm{\bx} + 25.4 \norm{\be}.
\end{equation}
\end{cor}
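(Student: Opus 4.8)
The plan is to treat Theorem~\ref{thm:exactSparse} as a black box supplying the one-step contraction
\[
\norm{\bx - \bx^{\ell+1}} \le C_1 \norm{\bx - \bx^{\ell}} + C_2 \norm{\be},
\]
and then simply iterate it from the initialization $\bx^0 = 0$. First I would pin down the numerical values of the constants. Substituting $\epsilon_1 = \tfrac{1}{10}$, $\epsilon_2 = 1$, and $\delta_{4k} = 0.029$ into the formula for $C_1$ gives $((2.1)(0.029) + 0.1)(3)\sqrt{1.029/0.971}$, and a direct evaluation shows this lies just below $0.5$; the same substitution into $C_2$ yields $3(2.1 \cdot 1.029 + 2)/\sqrt{0.971}$, which evaluates to just under $12.7$. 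The one point requiring care here is to round each intermediate quantity in the conservative direction, so that the claimed bounds $C_1 \le 0.5$ and $C_2 \le 12.7$ are genuinely valid upper bounds rather than merely approximate equalities.

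With $C_1 \le \tfrac12$ and $C_2 \le 12.7$ in hand, I would unroll the recursion. Writing $e_\ell \defby \norm{\bx - \bx^{\ell}}$, repeated application of the contraction gives
\[
e_\ell \le C_1^\ell\, e_0 + C_2 \norm{\be} \sum_{j=0}^{\ell-1} C_1^{\,j} \le 2^{-\ell} e_0 + C_2 \norm{\be} \cdot \frac{1}{1 - C_1}.
\]
Since $\bx^0 = 0$ we have $e_0 = \norm{\bx}$, and since $C_1 \le \tfrac12$ the geometric series is bounded by $1/(1 - \tfrac12) = 2$. Combining these with $C_2 \le 12.7$ produces
\[
e_\ell \le 2^{-\ell}\norm{\bx} + 2 \cdot 12.7\,\norm{\be} = 2^{-\ell}\norm{\bx} + 25.4\,\norm{\be},
\]
which is exactly~\eqref{eq:finalExactSparse}.

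There is no real conceptual obstacle in this argument; the entire content of the corollary is already carried by the per-iteration bound of Theorem~\ref{thm:exactSparse}, and the recursion is the standard unrolling of a linear contraction with an additive noise floor. The only place a mistake could creep in is the arithmetic, so I would double-check that $C_1$ is strictly below (not merely close to) $\tfrac12$, since the clean factor of $2$ in the series sum---and hence the final constant $25.4 = 2 \times 12.7$ multiplying $\norm{\be}$---depends precisely on this bound holding.
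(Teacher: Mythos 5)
Your proposal is correct and takes essentially the same route as the paper: the paper likewise notes that $\epsilon_1 = \tfrac{1}{10}$, $\epsilon_2 = 1$, $\delta_{4k} = 0.029$ give $C_1 \le 0.5$ and $C_2 \le 12.7$, and then obtains \eqref{eq:finalExactSparse} by applying \eqref{eq:iterativeExactSparse} recursively from $\bx^0 = 0$, with the geometric series bounded by $2$ yielding the constant $25.4$. Your only addition is to write out the unrolling explicitly, which the paper leaves implicit.
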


By taking a sufficient number of iterations $\ell$, the first term on the right hand side of (\ref{eq:finalExactSparse}) can be made arbitrarily small, and ultimately the recovery error depends only on the level of noise in the measurements.  For a precision parameter $\eta$, this shows that at most $O(\log(\norm{\bx}/\eta))$ iterations are needed to ensure that
$$
\norm{\bx - \widehat{\bx}} = O(\eta + \norm{\be}) = O(\max\left\{\eta, \norm{\be}\right\}).
$$
The cost of a single iteration of the method is dominated by the cost of the \textbf{identify} and \textbf{update} steps, where we must obtain sparse approximations to $\bh$ and $\widetilde{\bx}$, respectively.  We emphasize again that there is no known  algorithm for computing the approximation $\cS_{\bD}$ efficiently, and the ultimate computational complexity of the algorithm will depend on this choice.  However, in the absence of a better choice, one natural option for estimating $\cS_{\bD}$ is to use a greedy method such as OMP or CoSaMP (see Section~\ref{sec:sims} for experimental results using these choices).  The running time of these greedy methods on an $n\times d$ dictionary $\bD$ are $O(knd)$ or $O(nd)$, respectively~\cite{NeedeT_CoSaMP}.  Therefore, using these methods as approximations in the \textbf{identify} and \textbf{update} steps yields an overall running time of $O(knd\log(\norm{\bx}/\eta))$ or $O(nd\log(\norm{\bx}/\eta))$. For sparse signal recovery, these running times are in line with state-of-the-art bounds for traditional CS algorithms such as CoSaMP~\cite{NeedeT_CoSaMP}, except that Signal Space CoSaMP can be applied in settings where the dictionary $\bD$ is not unitary.  The error bounds of Corollary~\ref{cor:exactSparse} also match those of classical results, except that here we assume suitable accuracy of $\cS_{\bD}$.  Of course, since no efficient near-optimal projection method is known, this presents a weakness in these results, but it is one shared by all comparable results in the existing literature.

\subsection{A Discussion Concerning the Recovery of Arbitrary Signals}
\label{ssec:compress}

We can extend our analysis to account for signals $\bx$ that do not exactly have a sparse representation in the dictionary $\bD$. For the sake of illustration, we again take $\epsilon_1 = \frac{1}{10}$, $\epsilon_2 = 1$, and $\delta_{4k} = 0.029$, and we show how (\ref{eq:finalExactSparse}) can be extended.

We again assume measurements of the form $\by = \bA \bx + \be$ but allow $\bx$ to be an arbitrary signal in $\complex^n$. For any vector $\balpha_k \in \complex^d$ such that $\norm[0]{\balpha_k} \le k$, we can write
$$
\by = \bA \bx + \be = \bA\bD\balpha_k + \bA (\bx - \bD\balpha_k)  + \be.
$$
The term $\widetilde{\be} := \bA (\bx - \bD\balpha_k)  + \be$ can be viewed as noise in the measurements of the $k$-sparse signal $\bD\balpha_k$.  Then by~\eqref{eq:finalExactSparse} we have
\begin{equation}
\norm{\bD\balpha_k - \bx^{\ell}} \leq 2^{-\ell}\norm{\bD\balpha_k} + 25.4 \norm{\widetilde{\be}} \leq 2^{-\ell}\norm{\bD\balpha_k} + 25.4\left( \norm{\be} + \norm{\bA (\bx - \bD\balpha_k)}\right),
\label{eq:tail1}
\end{equation}
and so using the triangle inequality,
\begin{equation}
\norm{\bx - \bx^{\ell}} \leq 2^{-\ell}\norm{\bD\balpha_k} + 25.4 \norm{\be} + \norm{\bx - \bD\balpha_k} + 25.4 \norm{\bA (\bx - \bD\balpha_k)}.
\label{eq:tail2}
\end{equation}
One can then choose the coefficient vector $\balpha_k$ to minimize the right hand side of (\ref{eq:tail2}).

Bounds very similar to this appear in the analysis of both the PLA~\cite{blumensath2011sampling} and ACoSaMP~\cite{giryes2012greedy}.  Such results are somewhat unsatisfying since it is unclear how the term $\norm{\bA (\bx - \bD\balpha_k)}$ might behave.  Unfortunately, these bounds are difficult to improve upon when $\bD$ is not unitary. Under some additional assumptions, however, we can make further modifications to (\ref{eq:tail2}). For example, the following proposition allows us to bound $\norm{\bA (\bx - \bD\balpha_k)}$.

\begin{prop}[Proposition 3.5 of~\cite{NeedeT_CoSaMP}] \label{rudel2}
Suppose that $\bA$ satisfies the upper inequality of the RIP, i.e., that $\norm{ \bA \bx } \leq \sqrt{1 + \delta_k} \norm{ \bx }$ holds for all $\bx \in \complex^n$ with $\norm[0]{\bx} \leq k$. Then, for every signal $\bz \in \complex^n$,
\begin{equation}\label{rhs2}
\norm{ \bA \bz } \leq \sqrt{1 + \delta_k}
	\left[ \norm{ \bz } + \frac{1}{\sqrt{k}}
		\norml{\bz} \right].
\end{equation}
\end{prop}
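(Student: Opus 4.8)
The plan is to reduce the bound for a general vector to the upper RIP applied only to $k$-sparse pieces, using the standard device of splitting $\bz$ into blocks sorted by magnitude. First I would relabel the coordinates of $\bz$ so that their magnitudes are nonincreasing, and partition the index set $\{1,\dots,n\}$ into consecutive blocks $T_0, T_1, T_2, \dots$, each of size $k$ (with the final block possibly smaller). Writing $\bz_{T_j}$ for the restriction of $\bz$ to $T_j$ (zero outside), we have $\bz = \sum_j \bz_{T_j}$, and each $\bz_{T_j}$ is $k$-sparse. Applying the triangle inequality and then the hypothesized upper RIP inequality to each piece gives $\norm{\bA\bz} \le \sum_j \norm{\bA\bz_{T_j}} \le \sqrt{1+\delta_k}\sum_j \norm{\bz_{T_j}}$, so it remains to bound $\sum_j \norm{\bz_{T_j}}$ by $\norm{\bz} + k^{-1/2}\norml{\bz}$.

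The head term is immediate: since $\bz_{T_0}$ is a coordinate restriction of $\bz$, we have $\norm{\bz_{T_0}} \le \norm{\bz}$. For the tail terms I would exploit the sorting. Because every entry of $\bz$ indexed by $T_j$ has magnitude no larger than every entry indexed by $T_{j-1}$, each such entry is at most the average magnitude over $T_{j-1}$; that is, $\norm[\infty]{\bz_{T_j}} \le \frac{1}{k}\norml{\bz_{T_{j-1}}}$ for $j \ge 1$. Converting the $\ell_2$ norm of the (at most) $k$-sparse block $\bz_{T_j}$ to its $\ell_\infty$ norm then yields $\norm{\bz_{T_j}} \le \sqrt{k}\,\norm[\infty]{\bz_{T_j}} \le \frac{1}{\sqrt{k}}\norml{\bz_{T_{j-1}}}$.

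Summing the tail over $j \ge 1$ and reindexing, I obtain $\sum_{j\ge 1}\norm{\bz_{T_j}} \le \frac{1}{\sqrt{k}}\sum_{j\ge 0}\norml{\bz_{T_j}} = \frac{1}{\sqrt{k}}\norml{\bz}$, where the last equality holds because the blocks partition the coordinates. Combining this with the head-term bound gives $\sum_j \norm{\bz_{T_j}} \le \norm{\bz} + \frac{1}{\sqrt{k}}\norml{\bz}$, which together with the earlier RIP step is exactly~\eqref{rhs2}.

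I expect the only real subtlety---rather than a genuine obstacle---to be the per-block comparison $\norm[\infty]{\bz_{T_j}} \le k^{-1}\norml{\bz_{T_{j-1}}}$, since it relies entirely on the decreasing rearrangement and on the inequality $\min \le \text{mean}$ over $T_{j-1}$. One should confirm that it persists when the final block has fewer than $k$ entries; it does, because whenever $T_j$ is nonempty its predecessor $T_{j-1}$ is a full block of size $k$, and dividing by $k$ rather than $|T_j|$ only weakens the $\ell_2$-to-$\ell_\infty$ step. Everything else is a routine assembly of the triangle inequality, the upper RIP, and the elementary bound $\norm{\bv} \le \sqrt{k}\,\norm[\infty]{\bv}$ for $k$-sparse $\bv$.
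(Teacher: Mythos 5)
Your argument is correct. Note, however, that the paper does not actually prove this statement---it imports it verbatim as Proposition~3.5 of the cited CoSaMP paper of Needell and Tropp---so there is no in-document proof to compare against. Your route is the classical block-decomposition argument: sort the entries of $\bz$ by magnitude, partition into consecutive size-$k$ blocks $T_0, T_1, \dots$, apply the triangle inequality and the upper RIP to each $k$-sparse piece, and control the tail via the chain $\norm{\bz_{T_j}} \le \sqrt{k}\,\norm[\infty]{\bz_{T_j}} \le \tfrac{1}{\sqrt{k}} \norml{\bz_{T_{j-1}}}$, which follows from the decreasing rearrangement and the fact that the minimum over $T_{j-1}$ is at most the mean. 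Every step checks out, including your handling of a possibly short final block (its predecessor is always a full block of size $k$, which is what the min-versus-mean step needs). For what it is worth, the original Needell--Tropp proof takes a different, duality-flavored route: it identifies the convex hull of the unit-norm $k$-sparse vectors and the convex body $\left\{ \bz : \norm{\bz} + \tfrac{1}{\sqrt{k}}\norml{\bz} \le 1 \right\}$, and establishes the containment of the latter in the former by comparing support functions, after which the bound follows from convexity of $\bz \mapsto \norm{\bA\bz}$. The convex-geometric argument generalizes more readily to other ``atomic'' models, while your elementary decomposition is more self-contained and makes the role of the sorted blocks (and hence the $k^{-1/2}\norml{\bz}$ tail term) completely explicit; either is a legitimate proof of~\eqref{rhs2}.
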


Plugging this result in to (\ref{eq:tail2}), we have
\begin{eqnarray}
\norm{\bx - \bx^{\ell}} &\leq& 2^{-\ell}\norm{\bD\balpha_k} + 25.4 \norm{\be} + (25.4 \sqrt{1 + \delta_k}+1) \norm{\bx - \bD\balpha_k} \nonumber \\ && ~ + \frac{25.4\sqrt{1 + \delta_k}}{\sqrt{k}} \norml{\bx - \bD\balpha_k}.
\label{eq:tail3}
\end{eqnarray}
For any $\bx \in \complex^n$, one could define the model mismatch quantity
$$
\text{mismatch}(\bx) := \inf_{\balpha_k: \norm[0]{\balpha_k} \le k} \left[ \norm{ \bx - \bD \balpha_k } + \frac{1}{\sqrt{k}} \norml{\bx - \bD \balpha_k} \right].
$$
We remark that this mismatch quantity is analogous to the tail bounds in the literature for methods which do not allow for redundant dictionaries.  In particular, the $\ell_1$-norm term in the classical setting is required on account of geometric results about Gelfand widths~\cite{GG84:On-widths,Kas77:The-widths}.  If this quantity is large, then the signal is far from compressible and we are not in a setting for which our method is designed.
Plugging this definition into (\ref{eq:tail3}), we obtain
\begin{equation}
\norm{\bx - \bx^{\ell}} \leq 2^{-\ell}\norm{\bD\balpha_k} + 25.4 \norm{\be} + 26.4 \sqrt{1 + \delta_k} \cdot \text{mismatch}(\bx).
\label{eq:tail4}
\end{equation}
In some sense, one can view $\text{mismatch}(\bx)$ as the ``distance'' from $\bx$ to the set of signals that are $k$-sparse in the dictionary $\bD$, except that the actual ``distance'' being measured is a mixed $\ell_2/\ell_1$ norm. In cases where one expects this distance to be small, (\ref{eq:tail4}) guarantees that the recovery error will be small.

We close this discussion by noting that if we make the stronger assumption that $\bA\bD$ actually satisfies the RIP, we can also measure the model mismatch in the coefficient space rather than the signal space. Let $\balpha \in \complex^d$ be any vector that satisfies $\bx = \bD \balpha$. Then using a natural extension of Proposition~\ref{rudel2}, we conclude that
$$
\norm{\bA (\bx - \bD\balpha_k)} = \norm{\bA\bD(\balpha - \balpha_k)} \leq \sqrt{1+\delta_k}\left(\norm{\balpha - \balpha_k} + \frac{1}{\sqrt{k}}\norml{\balpha - \balpha_k}\right).
$$
When $\balpha$ is compressible, the recovery error (\ref{eq:tail2}) will be reasonably small.

\section{Simulations}
\label{sec:sims}

As we discussed in Section~\ref{sec:assumptions}, the main difficulty in implementing our algorithm is in computing projections of vectors in the signal space onto the model family of candidate sparse signals. One such projection is required in the {\bf identify} step in Algorithm~\ref{alg:cosampModified}; another such projection is required in the {\bf update} step. Although our theoretical analysis can accommodate near-optimal support estimates $\cS_{\bD}(\bz,k)$ that satisfy the condition (\ref{eq:approxProj}), computing even near-optimal supports can be a challenging task for many dictionaries of practical interest. In this section, we present simulation results using practical (but heuristic) methods for attempting to find near-optimal supports $\cS_{\bD}(\bz,k)$. Specifically, we find ourselves in a situation that mirrors the early days of the sparse recovery literature---we would like to identify a sparse vector that well-approximates $\bz$.  This is precisely the scenario where recovery algorithms like OMP and $\ell_1$-minimization were first proposed, so despite the lack of a theoretical guarantee, we can still apply these algorithms.  Of course, if we are leaving the solid ground of theory and entering the world of heuristics, we can also just consider applying standard algorithms like OMP, CoSaMP, and $\ell_1$-minimization algorithms directly to the CS recovery problem to see how they perform.  We will see, however, that the ``Signal Space CoSaMP'' algorithms resulting from using standard solvers for $\cS_{\bD}(\bz,k)$---even though they are not quite covered by our theory---can nevertheless outperform these classical CS reconstruction techniques.

In all simulations that follow, we set the signal length $n = 256$. We let $\bD$ be an $n \times d$ dictionary (two different dictionaries are considered below), and we construct a length-$d$ coefficient vector $\balpha$ with $k = 8$ nonzero entries chosen as i.i.d.\ Gaussian random variables. We set $\bx = \bD \balpha$, construct $\bA$ as a random $m \times n$ matrix with i.i.d.\ Gaussian entries, and collect noiseless measurements $\by = \bA \bx$. After reconstructing an estimate of $\bx$, we declare this recovery to be perfect if the SNR of the recovered signal estimate is above 100~dB. All of our simulations were performed via a MATLAB software package that we have made available for download at \url{http://users.ece.gatech.edu/~mdavenport/software}.

\subsection{Renormalized Orthogonal Dictionary}

As an instructive warm-up, we begin with one almost trivial example where the optimal projection can be computed exactly: we construct $\bD$ by taking an orthobasis and renormalizing its columns while maintaining their orthogonality. To compute $\Lambda_{\text{opt}}(\bz,k)$ with such a dictionary, one merely computes $\bD^\ast \bz$, divides this vector elementwise by the column norms of $\bD$, and sets $\Lambda$ equal to the positions of the $k$ largest entries.

For the sake of demonstration, we set $\bD$ equal to the $n \times n$ identity matrix, but we then rescale its first $n/2$ diagonal entries to equal $100$ instead of $1$. We construct sparse coefficient vectors $\balpha$ as described above, with supports chosen uniformly at random. As a function of the number of measurements $m$, we plot in Figure~\ref{fig:nnd} the percent of trials in which Signal Space CoSaMP recovers $\bx$ exactly. We see that with roughly 50 or more measurements, the recovery is perfect in virtually all trials. In contrast, this figure also shows the performance of traditional CoSaMP using the combined dictionary $\bA \bD$ to first recover $\balpha$. Because of the non-normalized columns in $\bD$, CoSaMP almost never recovers the correct signal; in fact its support estimates almost always are contained in the set $\{1,2,\dots,n/2\}$. Of course, if presented with this problem in practice one would naturally want to modify traditional CoSaMP to account for the various column norms in $\bD$; the point here is merely that our algorithm gives a principled way to make this (trivial) modification.

\begin{figure}
   \centering
   \includegraphics[width=2.8in]{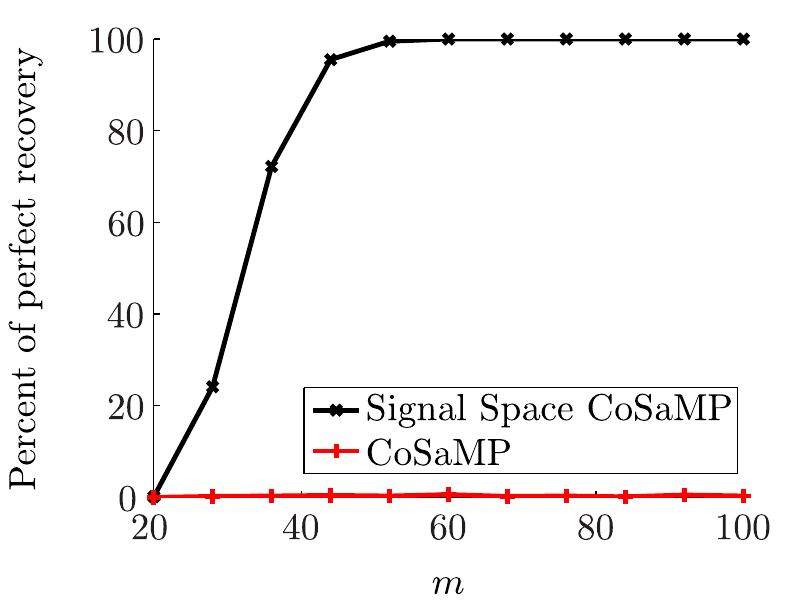}
   \caption{\small \sl Performance in recovering signals having a $k=8$ sparse representation in a dictionary $\bD$ with orthogonal, but not normalized, columns. The plot shows, for various numbers of measurements $m$, the percent of trials in which each algorithm recovers the signal exactly. Signal Space CoSaMP (in which we can compute optimal projections) outperforms an unmodified CoSaMP algorithm.
   \label{fig:nnd}}
\end{figure}

\subsection{Overcomplete DFT Dictionary}

As a second example, we set $d = 4n$ and let $\bD$ be a $4\times$ overcomplete DFT dictionary. In this dictionary, neighboring columns are highly coherent, while distant columns are not. We consider two scenarios: one in which the $k=8$ nonzero entries of $\balpha$ are randomly positioned but well-separated (with a minimum spacing of $8$ zeros in between any pair of nonzero entries), and one in which the $k=8$ nonzero entries all cluster together in a single, randomly-positioned block. Because of the nature of the columns in $\bD$, we see that many recovery algorithms perform differently in these two scenarios.

\subsubsection{Well-separated coefficients}
\label{subsec:sep}

Figure~\ref{fig:overcomplete}(a) plots the performance of six different recovery algorithms for the scenario where the nonzero entries of $\balpha$ are well-separated. Two of these algorithms are the traditional OMP and CoSaMP algorithms from CS, each using the combined dictionary $\bA \bD$ to first recover $\balpha$. We actually see that OMP performs substantially better than CoSaMP in this scenario, apparently because it can select one coefficient at a time and is less affected by the coherence of $\bD$. It is somewhat remarkable that OMP succeeds at all, given that $\bA \bD$ will not satisfy the RIP and we are not aware of any existing theory that would guarantee the performance of OMP in this scenario.

\begin{figure}
   \centering
   \begin{tabular}{cc}
   \includegraphics[width=3in]{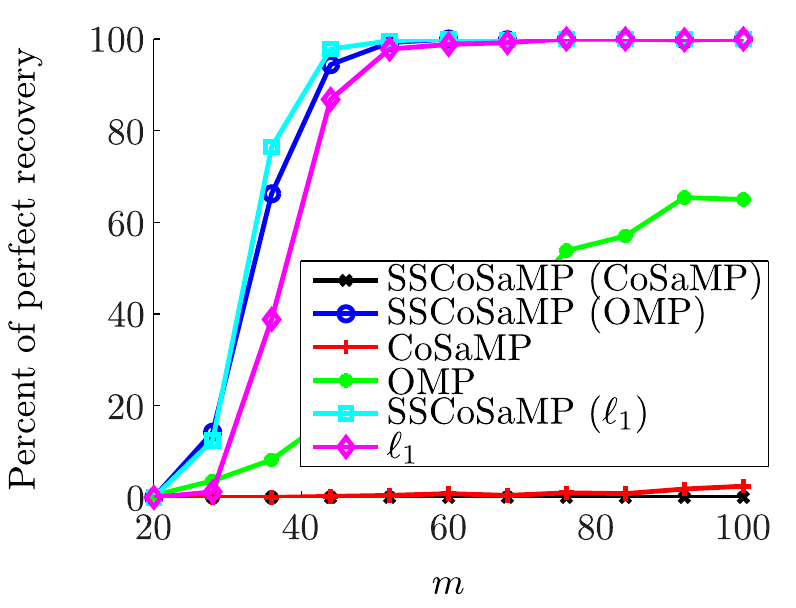} & \includegraphics[width=3in]{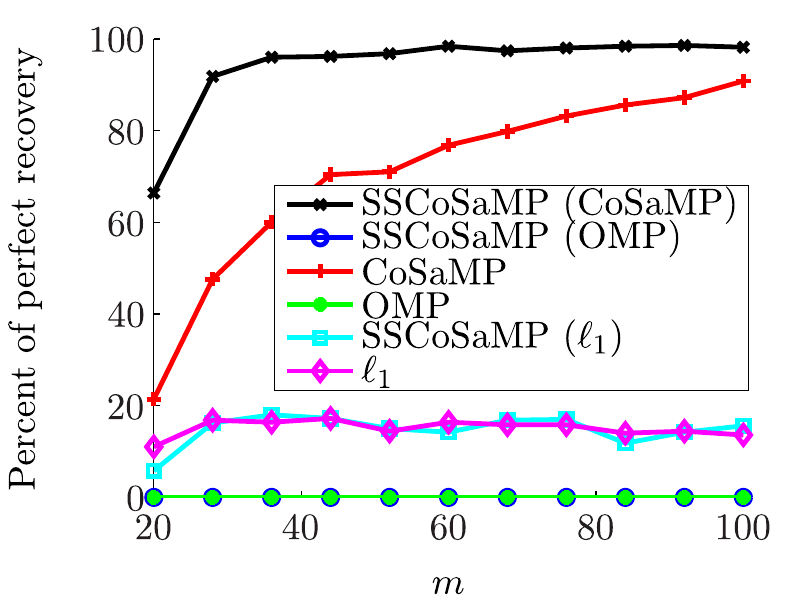} \\
   \hspace*{9mm} (a) & \hspace*{9mm} (b)
   \end{tabular}
   \caption{\small \sl Performance in recovering signals having a $k=8$ sparse representation in a 4$\times$ overcomplete DFT dictionary. Two scenarios are shown: (a)~one in which the $k=8$ nonzero entries of $\balpha$ are randomly positioned but well-separated, and (b)~one in which the $k=8$ nonzero entries all cluster together in a single, randomly-positioned block. Algorithms involving OMP and $\ell_1$-minimization perform well in the former scenario; algorithms involving CoSaMP perform well in the latter. In general, the Signal Space CoSaMP variants outperform the corresponding traditional CS algorithm.
   \label{fig:overcomplete}}
\end{figure}

We also show in Figure~\ref{fig:overcomplete}(a) two variants of Signal Space CoSaMP: one in which OMP is used for computing $\cS_{\bD}(\bz,k)$ (labeled ``SSCoSaMP (OMP)''), and one in which CoSaMP is used for computing $\cS_{\bD}(\bz,k)$ (labeled ``SSCoSaMP (CoSaMP)''). That is, these algorithms actually use OMP or CoSaMP as an inner loop inside of Signal Space CoSaMP to  find a sparse solution to the equation $\bz = \bD \balpha$. In this scenario, we see that the performance of SSCoSaMP (OMP) is substantially better than OMP, while the performance of SSCoSaMP (CoSaMP) is poor. We believe that this happens for the same reason that traditional OMP outperforms traditional CoSaMP. In general, we have found that when OMP performs well, SSCoSaMP (OMP) may perform even better, and when CoSaMP performs poorly, SSCoSaMP (CoSaMP) may still perform poorly.

Figure~\ref{fig:overcomplete}(a) also shows the performance of two algorithms that involve convex optimization for sparse regularization. One, labeled ``$\ell_1$,'' uses an $\ell_1$-minimization approach~\cite{SPGL1} to find a sparse coefficient vector $\balpha'$ subject to the constraint that $\by = \bA \bD \balpha'$. This algorithm actually outperforms traditional OMP in this scenario. The other, labeled ``SSCoSaMP ($\ell_1$),'' is a variant of Signal Space CoSaMP in which $\ell_1$-minimization is used for computing $\cS_{\bD}(\bz,k)$.\footnote{We are not unaware of the irony of using $\ell_1$-minimization inside of a greedy algorithm.} Specifically, to compute $\cS_{\bD}(\bz,k)$, we search for the vector $\balpha'$ having the smallest $\ell_1$ norm subject to the constraint that $\bz = \bD \balpha'$, and we then choose the support that contains the $k$ largest entries of this vector. Remarkably, this algorithm performs best of all. We believe that this is likely due to the fact that, for the overcomplete DFT dictionary, $\ell_1$-minimization is known to be capable of finding $\Lambda_{\text{opt}}(\bz,k)$ exactly when $\bz = \cP_{\Lambda_{\text{opt}}(\bz,k)} \bz$ and the entries of $\Lambda_{\text{opt}}(\bz,k)$ are sufficiently well-separated~\cite{Superres}. While we do not guarantee that this condition will be met within every iteration of Signal Space CoSaMP, the fact that the original coefficient vector $\balpha$ has well-separated coefficients seems to be intimately related to the success of $\ell_1$ and SSCoSaMP ($\ell_1$) here.

We note that all of the above algorithms involve a step where a least-squares problem must be solved on an estimated support set.\footnote{We use this for debiasing after running $\ell_1$.} This might seem somewhat contrary to our signal-focused approach, since solving this least-squares problem essentially involves recovering a set of coefficients $\balpha$ and then computing $\widetilde{\bx} = \bD_T \balpha$.   However, recall that at this point in the algorithm we are solving an over-determined system, so there is no significant difference between solving for $\balpha$ versus $\widetilde{\bx}$.  The problem with coefficient-focused strategies and analysis is that they rely on identifying the subset $T$ containing the ``correct'' subset of coefficients, whereas for us, $T$ could be very different from the ``correct'' subset, and the vector $\balpha$ has no particular significance---all that matters is the vector $\widetilde{\bx}$ that this step ultimately synthesizes.  Nevertheless, it is worth noting that when a dictionary $\bD$ is highly coherent, it can be numerically challenging to solve this problem, as the resulting submatrix can be very poorly conditioned. Following our discussion in~\cite{davenport2012compressive}, we employ {\em Tikhonov regularization}~\cite{Phill_technique,Tikho_Solution,TikhoA_Solutions,Hanse_Regularization} and solve a norm-constrained least-squares problem to improve its conditioning. For this we must provide our algorithms with an upper bound on the norm of the sparse coefficient vector. In the simulations above, we have selected this bound to be $10 \times$ the true norm of the original $\balpha$. The selection of this bound does not have a substantial impact on the performance of OMP or SSCoSaMP (OMP), but we have noticed that CoSaMP and SSCoSaMP (CoSaMP) perform somewhat better when the true norm $\norm{\balpha}$ is provided as an oracle.

\subsubsection{Clustered coefficients}
\label{subsec:cluster}

Figure~\ref{fig:overcomplete}(b) plots the performance of the same six recovery algorithms for the scenario where the nonzero entries of $\balpha$ are clustered into a single block. Although one could of course employ a block-sparse recovery algorithm in this scenario, our intent is more to study the impact that neighboring active atoms have on the algorithms above.

In this scenario, between the traditional greedy algorithms, CoSaMP now outperforms OMP, apparently because it is designed to select multiple indices at each step and will not be as affected by the coherence of neighboring active columns in $\bD$. We also see that the performance of SSCoSaMP (CoSaMP) is somewhat better than CoSaMP, while the performance of SSCoSaMP (OMP) is poor. We believe that this happens for the same reason that traditional CoSaMP outperforms traditional OMP. In general, we have found that when CoSaMP performs well, SSCoSaMP (CoSaMP) may perform even better, and when OMP performs poorly, SSCoSaMP (OMP) may still perform poorly.

In terms of our condition for perfect recovery (estimating $\bx$ to within an SNR of 100~dB or more), neither of the algorithms that involve $\ell_1$-minimization perform well in this scenario. However, we do note that both $\ell_1$ and SSCoSaMP ($\ell_1$) do frequently recover an estimate of $\bx$ with an SNR of 50~dB or more, though still not quite as frequently as SSCoSaMP (CoSaMP) does.

In these simulations, we again use Tikhonov regularization with a norm bound that is $10 \times$ the true norm of the original $\balpha$. However, we have not found that changing this norm bound has a significant impact on CoSaMP or SSCoSaMP (CoSaMP) in this scenario. We also note that in this scenario we found it beneficial to run CoSaMP and the three Signal Space CoSaMP methods for a few more iterations than in the case of well-separated coefficients; convergence to exactly the correct support can be slow in this case where multiple neighboring atoms in a coherent dictionary are active.

\subsubsection{Additional investigations}

We close with some final remarks concerning additional investigations. First, our simulations above have tested three heuristic methods for attempting to find near-optimal supports $\cS_{\bD}(\bz,k)$, and we have evaluated the performance of these methods based on the ultimate success or failure of SSCoSaMP in recovering the signal. In some problems of much smaller dimension (where we could use an exhaustive search to find $\Lambda_{\text{opt}}(\bz,k)$) we monitored the performance of CoSaMP, OMP, and $\ell_1$-minimization for computing $\cS_{\bD}(\bz,k)$ in terms of the effective $\epsilon_1$ and $\epsilon_2$ values they attained according to the metric in~(\ref{eq:approxProj}). For scenarios where the nonzero entries of $\balpha$ were well-separated, we observed typical $\epsilon_1$ and $\epsilon_2$ values for OMP and $\ell_1$-minimization on the order of $1$ or less. For CoSaMP, these values were larger by one or two orders of magnitude, as might be expected based on the signal recovery results presented in Section~\ref{subsec:sep}.\footnote{The occasional exception in some of these simulations occurred when it happened that $\norm{ \bz - \cP_{\Lambda_{\text{opt}}(\bz,k)} \bz } \approx 0$ but $\norm{ \cP_{\Lambda_{\text{opt}}(\bz,k)} \bz - \cP_{ \cS_{\bD}(\bz,k) } \bz }$ was not correspondingly small, and so the effective $\epsilon_2$ value was large or infinite.} For scenarios where the nonzero entries of $\balpha$ were clustered, the $\epsilon_2$ values for OMP and $\ell_1$-minimization increased by about one order of magnitude, but the $\epsilon_1$ and $\epsilon_2$ values for CoSaMP did not change significantly. The primary reason for this, despite the superior signal recovery performance of SSCoSaMP (CoSaMP) in Section~\ref{subsec:cluster}, appears to be that even when the nonzero entries of $\balpha$ are clustered, the support of the optimal approximation of $\bh$ in the \textbf{identify} step will not necessarily be clustered, and so CoSaMP will struggle to accurately identify this support.

Second, we remark that our simulations in Sections~\ref{subsec:sep} and~\ref{subsec:cluster} have tested two extremes: one scenario in which the nonzero entries of $\balpha$ were well-separated, and one scenario in which the nonzero entries clustered together in a single block. Among the heuristic methods that we have used for attempting to find near-optimal supports $\cS_{\bD}(\bz,k)$, the question of which method performs best has been shown to depend on the sparsity pattern of $\balpha$. Although we do not present detailed results here, we have also tested these same algorithms using a hybrid sparsity model for $\balpha$ in which half of the nonzero entries are well-separated while the other half are clustered. As one might expect based on the discussions above, all three of the SSCoSaMP methods struggle in this scenario (as do the three standard CS methods). This is yet another reminder that more work is needed to understand what techniques are appropriate for approximating $\cS_{\bD}(\bz,k)$ and how to optimize these techniques depending on what is known about the signal's sparsity pattern.

\appendix

\section{Proof of Theorem~\ref{thm:exactSparse}}
\label{sec:proofExactSparse}

The proof of Theorem~\ref{thm:exactSparse} requires four main lemmas, which are listed below and proved in Sections~\ref{sec:proofStatement1}--\ref{sec:proofStatement4}.  In the lemmas below, $\bv = \bx - \bx^{\ell}$ denotes the recovery error in signal space after $\ell$ iterations.

\begin{lemma}[Identify]
\label{lem:statement1}
$\norm{\cP_{\Omega^\bot} \bv} \le ((2 + \epsilon_1) \delta_{4k} + \epsilon_1) \norm{\bv} + (2+\epsilon_1) \sqrt{1 + \delta_{4k}} \norm{\be}$.
\end{lemma}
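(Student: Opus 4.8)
Let me set up the notation and sketch the argument.

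=== PROOF PROPOSAL ===

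\textbf{Setup and strategy.} Let $\Lambda$ denote the support of the true coefficient vector $\balpha$, so that $\bx = \bD\balpha \in \cR(\bD_\Lambda)$ with $|\Lambda| = k$, and recall $\bv = \bx - \bx^\ell$. The goal is to bound $\norm{\cP_{\Omega^\bot}\bv}$, where $\Omega = \cS_{\bD}(\bh, 2k)$ is the near-optimal $2k$-support of the proxy $\bh = \bA^\ast \br = \bA^\ast(\by - \bA\bx^\ell) = \bA^\ast(\bA\bv + \be)$. The quantity $\norm{\cP_{\Omega^\bot}\bv}$ measures how much of the residual signal $\bv$ is \emph{missed} by the identified support $\Omega$. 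Since $\bv$ lives (approximately) in a low-dimensional subspace spanned by $\bD_{\Lambda\cup\Gamma}$-type columns, the plan is to show that if $\Omega$ were a good approximation to the optimal projection of $\bh$, then $\Omega$ must capture nearly all of $\bv$. The proof strategy mirrors the ``identification'' step of the original CoSaMP analysis, but reframed entirely in signal space.

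\textbf{Key steps.} First I would compare $\norm{\cP_{\Omega^\bot}\bv}$ to the error incurred by the \emph{optimal} projection $\Lambda_{\text{opt}}(\bh, 2k)$ via the triangle inequality and the near-optimality bound \eqref{eq:approxProj}, which controls $\norm{\cP_{\Lambda_{\text{opt}}(\bh,2k)}\bh - \cP_{\Omega}\bh}$ by $\epsilon_1 \norm{\cP_{\Lambda_{\text{opt}}(\bh,2k)}\bh}$. The natural comparison support here is $\Lambda \cup \supp(\bx^\ell)$-related, i.e. a $2k$-support that contains the true support of $\bv$; projecting the proxy $\bh$ onto such a support should recover most of $\bv$ because $\bh$ is close to $\bv$ in the relevant subspace. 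Second, I would expand $\bh = \bA^\ast\bA\bv + \bA^\ast\be$ and use the $\bD$-RIP of order $4k$ to show that, restricted to the relevant $4k$-dimensional subspaces, $\bA^\ast\bA$ acts nearly like the identity on $\bv$; the deviation $\norm{(\bI - \cP)\bA^\ast\bA\cP'\bv}$-type terms contribute the factor $\delta_{4k}$, and the noise contributes the $\sqrt{1+\delta_{4k}}\norm{\be}$ term via the upper $\bD$-RIP bound $\norm{\bA^\ast\be}$ restricted to a $2k$-subspace. Combining the near-optimality slack ($\epsilon_1$) with the $\bD$-RIP deviation ($\delta_{4k}$) should produce the composite coefficient $(2+\epsilon_1)\delta_{4k} + \epsilon_1$ multiplying $\norm{\bv}$.

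\textbf{The main obstacle.} The delicate part is accounting for the redundancy of $\bD$: because $\cP_{\Omega^\bot}$ is \emph{not} the projection onto $\cR(\bD_{\{1,\dots,d\}\setminus\Omega})$ (as the paper's footnote stresses), I must be careful that all projection and orthogonality manipulations take place in signal space rather than coefficient space, and that the support $\Omega$ of size $2k$ genuinely yields a subspace on which $\bA$ satisfies the $\bD$-RIP when combined with the $k$-dimensional supports of $\bx$ and $\bx^\ell$ (hence the order $4k$). The hard step will be bookkeeping the interaction between the approximate projection \eqref{eq:approxProj} — whose error is measured in terms of $\norm{\cP_{\Lambda_{\text{opt}}(\bh,2k)}\bh}$, a quantity about the \emph{proxy}, not about $\bv$ — and converting that back into a clean bound on $\norm{\cP_{\Omega^\bot}\bv}$. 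I expect this requires one further inequality relating $\norm{\cP_{\Lambda_{\text{opt}}(\bh,2k)}\bh}$ to $\norm{\bv}$ and $\norm{\be}$, obtained by noting that the optimal $2k$-projection of $\bh$ can only be larger than the projection onto the true support of $\bv$, again invoking the $\bD$-RIP. Assembling these pieces in the right order, while tracking which $\epsilon_1$ and $\delta_{4k}$ factors attach to $\norm{\bv}$ versus $\norm{\be}$, is the crux.
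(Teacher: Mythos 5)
Your plan follows the same route as the paper's proof---pass from $\bv$ to the proxy $\vtil = \bA^*\bA\bv + \bA^*\be$, invoke the near-optimality condition \eqref{eq:approxProj} with parameter $\epsilon_1$, and use the $\bD$-RIP on the union of the relevant supports (size at most $4k$, via Lemma~\ref{lem:opnorm}) to peel off the $\delta_{4k}\norm{\bv}$ and $\sqrt{1+\delta_{4k}}\norm{\be}$ contributions. But the step you yourself single out as the crux is exactly where the sketch has a genuine gap, in two respects. First, to control the term $\epsilon_1\norm{\cP_{\Omega^*}\vtil}$ with $\Omega^* = \Lambda_{\text{opt}}(\vtil,2k)$, you propose to use that ``the optimal $2k$-projection of $\bh$ can only be larger than the projection onto the true support of $\bv$.'' That inequality points the wrong way: you need an \emph{upper} bound on $\norm{\cP_{\Omega^*}\vtil}$, and bounding it from below by another projection accomplishes nothing. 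The paper instead writes $\norm{\cP_{\Omega^*}\vtil} = \norm{\cP_{\Omega^*}\cP_{R\cup\Omega^*}\vtil} \le \norm{\cP_{R\cup\Omega^*}\vtil}$ (nesting plus contractivity of orthogonal projections, Lemma~\ref{lem:nestedproj}), where $R$ is a $2k$-set with $\cP_R\bv = \bv$, and then bounds $\norm{\cP_{R\cup\Omega^*}\vtil} \le (1+\delta_{4k})\norm{\bv} + \sqrt{1+\delta_{4k}}\norm{\be}$ by the triangle inequality and Lemma~\ref{lem:opnorm}.

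Second, your sketch never resolves how to compare the support $\Omega^*$, which is optimal for the \emph{proxy} $\vtil$, against the support $R$ that actually captures $\bv$; this is the term $\norm{\cP_{R\cup\Omega^*}\vtil - \cP_{\Omega^*}\vtil}$ in the paper's four-way decomposition of $\bv - \cP_{\Omega}\vtil$. The paper handles it with a Pythagorean argument: for any $\Pi \subset R\cup\Omega^*$ one has $\norm{\vtil - \cP_{\Pi}\vtil}^2 = \norm{\vtil - \cP_{R\cup\Omega^*}\vtil}^2 + \norm{\cP_{R\cup\Omega^*}\vtil - \cP_{\Pi}\vtil}^2$, so optimality of $\Omega^*$ for the left-hand side transfers to the second summand and yields $\norm{\cP_{R\cup\Omega^*}\vtil - \cP_{\Omega^*}\vtil} \le \norm{\cP_{R\cup\Omega^*}\vtil - \cP_{R}\vtil}$, after which $\cP_R\vtil$ is compared to $\cP_R(\bv + \bA^*\be)$ using $\cP_R\bv = \bv$. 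Without this device (or an equivalent one), the $(2+\epsilon_1)$ prefactors do not assemble. A smaller omission: the very first move, $\norm{\cP_{\Omega^\bot}\bv} = \norm{\bv - \cP_{\Omega}\bv} \le \norm{\bv - \cP_{\Omega}\vtil}$, which uses that $\cP_{\Omega}\bv$ is the nearest point to $\bv$ in $\cR(\bD_{\Omega})$, is what licenses transferring the whole estimate to the proxy; your outline assumes this implicitly but should state it.
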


\begin{lemma}[Merge]
\label{lem:statement2}
$\norm{\cP_{T^\bot} \bx} \le \norm{\cP_{\Omega^\bot} \bv}$.
\end{lemma}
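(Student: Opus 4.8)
The plan is to reduce the claim to two elementary facts about orthogonal projections onto nested subspaces, exploiting the inclusions $\Gamma \subseteq T$ and $\Omega \subseteq T$ that hold by definition of the merge step $T = \Omega \cup \Gamma$. The crucial structural input is that the current iterate $\bx^\ell$ lives in $\cR(\bD_\Gamma)$: indeed, the algorithm produces $\bx^\ell$ as $\bx^\ell = \cP_{\Gamma} \widetilde{\bx}$ in the previous iteration, so by construction $\bx^\ell \in \cR(\bD_\Gamma)$ (and for $\ell = 0$ we trivially have $\bx^0 = \bzero \in \cR(\bD_\emptyset)$). Everything else follows by bookkeeping with projectors.

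First I would handle the left-hand side. Since $\Gamma \subseteq T$, we have $\cR(\bD_\Gamma) \subseteq \cR(\bD_T)$, and therefore $\bx^\ell \in \cR(\bD_T)$. Because $\cP_{T^\bot}$ is the projection onto the orthogonal complement of $\cR(\bD_T)$, this gives $\cP_{T^\bot} \bx^\ell = \bzero$. Consequently, using $\bv = \bx - \bx^\ell$,
\[
\cP_{T^\bot} \bx = \cP_{T^\bot}(\bx^\ell + \bv) = \cP_{T^\bot} \bv,
\]
so the quantity we wish to bound is exactly $\norm{\cP_{T^\bot}\bv}$. This is the step where the merge step ``pays off'': by folding the old support $\Gamma$ into $T$, the portion of the true signal already captured by $\bx^\ell$ is removed from the residual that $\cP_{T^\bot}$ sees.

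Next I would compare $\cP_{T^\bot}$ and $\cP_{\Omega^\bot}$. Since $\Omega \subseteq T$, we have $\cR(\bD_\Omega) \subseteq \cR(\bD_T)$, and taking orthogonal complements reverses the inclusion: $\cR(\bD_T)^\bot \subseteq \cR(\bD_\Omega)^\bot$. Projection onto a subspace contained in a larger one can only decrease (or preserve) the norm, so for any vector $\bu$ we have $\norm{\cP_{T^\bot} \bu} \le \norm{\cP_{\Omega^\bot} \bu}$; applying this with $\bu = \bv$ and chaining with the previous display yields $\norm{\cP_{T^\bot}\bx} = \norm{\cP_{T^\bot}\bv} \le \norm{\cP_{\Omega^\bot}\bv}$, which is the claim.

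I do not expect a genuine obstacle here, as the lemma is essentially a geometric bookkeeping statement rather than one requiring the $\bD$-RIP or the near-optimality bound~\eqref{eq:approxProj}. The only points that warrant care are confirming that $\bx^\ell \in \cR(\bD_\Gamma)$ directly from the update step of Algorithm~\ref{alg:cosampModified}, and invoking the monotonicity of projection norms under subspace inclusion (equivalently, that $\cP_{T^\bot} = \cP_{T^\bot}\cP_{\Omega^\bot}$ when $\cR(\bD_\Omega) \subseteq \cR(\bD_T)$, so that $\norm{\cP_{T^\bot}\bu} \le \norm{\cP_{\Omega^\bot}\bu}$ since $\cP_{T^\bot}$ is a contraction). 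Both are standard, so the proof should be short.
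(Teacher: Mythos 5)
Your proof is correct and follows essentially the same route as the paper's: show $\cP_{T^\bot}\bx^{\ell} = \bzero$ so that $\cP_{T^\bot}\bx = \cP_{T^\bot}\bv$, then use $\Omega \subseteq T$ and monotonicity of projection norms under subspace inclusion. You merely fill in details the paper leaves implicit (that $\bx^{\ell} = \cP_{\Gamma}\widetilde{\bx} \in \cR(\bD_\Gamma) \subseteq \cR(\bD_T)$, and the reversal of inclusion on orthogonal complements), which is fine.
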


\begin{lemma}[Update]
\label{lem:statement3}
$\norm{\bx - \widetilde{\bx}} \le \sqrt{ \frac{1+\delta_{4k}}{1-\delta_{4k}}} \norm{\cP_{T^\bot} \bx} + \frac{2}{\sqrt{1-\delta_{4k}}} \norm{\be}$.
\end{lemma}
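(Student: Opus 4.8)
The plan is to lean on two properties of the iterate $\widetilde{\bx}$ produced by the \textbf{update} step: it lies in the low-dimensional subspace $\cR(\bD_T)$ on which $\bA$ acts as a near-isometry, and it is least-squares optimal within that subspace. Before invoking the $\bD$-RIP I would pin down the relevant sparsity bookkeeping. Writing $S=\supp(\balpha)$ we have $|S|\le k$, while $T=\Omega\cup\Gamma$ obeys $|T|\le|\Omega|+|\Gamma|=2k+k=3k$, so $|S\cup T|\le 4k$. Since $\bx=\bD\balpha\in\cR(\bD_S)$ and $\widetilde{\bx}\in\cR(\bD_T)$, the error $\bx-\widetilde{\bx}$ lies entirely in $\cR(\bD_{S\cup T})$, the span of at most $4k$ columns of $\bD$. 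This is precisely what licenses the $\bD$-RIP of order $4k$ and explains why the hypothesis is stated at order $4k$ rather than $3k$.

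Granting this, the prefactor $1/\sqrt{1-\delta_{4k}}$ emerges immediately from the lower $\bD$-RIP bound applied to $\bx-\widetilde{\bx}$:
\[
\norm{\bx-\widetilde{\bx}}\le\frac{1}{\sqrt{1-\delta_{4k}}}\,\norm{\bA(\bx-\widetilde{\bx})}.
\]
It then remains to control the measured error. Using $\by=\bA\bx+\be$ I would write $\bA(\bx-\widetilde{\bx})=(\by-\bA\widetilde{\bx})-\be$ and apply the triangle inequality, which isolates the residual $\norm{\by-\bA\widetilde{\bx}}$ plus one copy of $\norm{\be}$.

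The heart of the argument is bounding this residual through the optimality of the \textbf{update} step. Because $\widetilde{\bx}$ minimizes $\norm{\by-\bA\bz}$ over $\bz\in\cR(\bD_T)$ and the competitor $\cP_T\bx$ also lies in $\cR(\bD_T)$, we get $\norm{\by-\bA\widetilde{\bx}}\le\norm{\by-\bA\cP_T\bx}$. Substituting $\by=\bA\bx+\be$ and $\bx-\cP_T\bx=\cP_{T^\bot}\bx$ reduces the competitor residual to $\norm{\bA\cP_{T^\bot}\bx+\be}\le\norm{\bA\cP_{T^\bot}\bx}+\norm{\be}$, and since $\cP_{T^\bot}\bx=\bx-\cP_T\bx$ again lies in $\cR(\bD_{S\cup T})$, the upper $\bD$-RIP bound gives $\norm{\bA\cP_{T^\bot}\bx}\le\sqrt{1+\delta_{4k}}\,\norm{\cP_{T^\bot}\bx}$. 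Chaining these estimates and collecting the two copies of $\norm{\be}$ produces exactly
\[
\norm{\bx-\widetilde{\bx}}\le\sqrt{\frac{1+\delta_{4k}}{1-\delta_{4k}}}\,\norm{\cP_{T^\bot}\bx}+\frac{2}{\sqrt{1-\delta_{4k}}}\,\norm{\be}.
\]

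No single estimate here is hard; the care lies entirely in the subspace bookkeeping, namely verifying at each step that the vector fed to the $\bD$-RIP ($\bx-\widetilde{\bx}$, and then $\cP_{T^\bot}\bx$) really does live in a span of at most $4k$ dictionary columns. The one subtlety I would flag is that $\cP_{T^\bot}\bx$ is the projection onto the orthogonal complement of $\cR(\bD_T)$, yet it still belongs to $\cR(\bD_{S\cup T})$; this is not immediate from the definition of $\cP_{T^\bot}$ and relies on writing $\cP_{T^\bot}\bx=\bx-\cP_T\bx$ together with $\bx\in\cR(\bD_S)$. A more hands-on alternative---splitting $\bx-\widetilde{\bx}$ into the orthogonal pieces $\cP_{T^\bot}\bx$ and $\widetilde{\bx}-\cP_T\bx$ and estimating the latter via the normal equations satisfied by $\widetilde{\bx}$---also closes the argument, but it entangles the $\norm{\cP_{T^\bot}\bx}$ and $\norm{\be}$ contributions and does not reproduce the stated constants as cleanly, so I would favor the optimality route above.
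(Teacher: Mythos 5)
Your proof is correct and follows essentially the same route as the paper's: lower $\bD$-RIP on $\bx-\widetilde{\bx}$, triangle inequality to pass to the residual $\norm{\by-\bA\widetilde{\bx}}$, least-squares optimality against the competitor $\cP_T\bx$, and the upper $\bD$-RIP on $\cP_{T^\bot}\bx$, yielding the same constants. The explicit sparsity bookkeeping ($|S\cup T|\le 4k$) and the remark that $\cP_{T^\bot}\bx=\bx-\cP_T\bx$ lies in $\cR(\bD_{S\cup T})$ are details the paper leaves implicit, but they match its reasoning.
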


\begin{lemma}[Estimate]
\label{lem:statement4}
$\norm{\bx - \bx^{\ell+1}} \le (2+\epsilon_2) \norm{\bx - \widetilde{\bx}}$.
\end{lemma}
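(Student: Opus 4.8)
The plan is to bound $\norm{\bx - \bx^{\ell+1}}$ by a triangle inequality routed through $\widetilde{\bx}$, since the final signal estimate $\bx^{\ell+1} = \cP_\Gamma \widetilde{\bx}$ is obtained from $\widetilde{\bx}$ purely by the concluding projection step. Writing
$$
\norm{\bx - \bx^{\ell+1}} \le \norm{\bx - \widetilde{\bx}} + \norm{\widetilde{\bx} - \cP_\Gamma \widetilde{\bx}},
$$
the first term is exactly the quantity appearing on the right-hand side of the claim, so the entire task reduces to showing that the residual $\norm{\widetilde{\bx} - \cP_\Gamma \widetilde{\bx}}$ left after approximating $\widetilde{\bx}$ by its near-optimal $k$-sparse projection is at most $(1+\epsilon_2)\norm{\bx - \widetilde{\bx}}$.

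To control this residual, first I would pass through the \emph{optimal} $k$-sparse support of $\widetilde{\bx}$. Abbreviating $\Lambda^\ast := \Lambda_{\text{opt}}(\widetilde{\bx},k)$ and recalling that $\Gamma = \cS_{\bD}(\widetilde{\bx},k)$, the triangle inequality gives
$$
\norm{\widetilde{\bx} - \cP_\Gamma \widetilde{\bx}} \le \norm{\widetilde{\bx} - \cP_{\Lambda^\ast} \widetilde{\bx}} + \norm{\cP_{\Lambda^\ast}\widetilde{\bx} - \cP_\Gamma \widetilde{\bx}}.
$$
Here the second term is precisely the quantity bounded by the near-optimal projection hypothesis \eqref{eq:approxProj}; invoking the $\epsilon_2$ branch of that minimum (the branch measured against the optimal $k$-term residual) yields $\norm{\cP_{\Lambda^\ast}\widetilde{\bx} - \cP_\Gamma \widetilde{\bx}} \le \epsilon_2 \norm{\widetilde{\bx} - \cP_{\Lambda^\ast}\widetilde{\bx}}$, so the whole residual is at most $(1+\epsilon_2)\norm{\widetilde{\bx} - \cP_{\Lambda^\ast}\widetilde{\bx}}$.

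The final ingredient is to compare the optimal residual to $\norm{\bx - \widetilde{\bx}}$. Since $\bx = \bD\balpha$ with $\balpha$ exactly $k$-sparse, the signal $\bx$ lies in $\cR(\bD_S)$ for $S = \supp(\balpha)$, a support of cardinality $k$. Hence $S$ is a feasible competitor in the minimization defining $\Lambda^\ast$, and because $\cP_{S}\widetilde{\bx}$ is the closest point of $\cR(\bD_S)$ to $\widetilde{\bx}$ while $\bx$ is merely \emph{some} point of that subspace, I obtain
$$
\norm{\widetilde{\bx} - \cP_{\Lambda^\ast}\widetilde{\bx}} \le \norm{\widetilde{\bx} - \cP_S \widetilde{\bx}} \le \norm{\widetilde{\bx} - \bx}.
$$
Chaining these bounds gives $\norm{\widetilde{\bx} - \cP_\Gamma \widetilde{\bx}} \le (1+\epsilon_2)\norm{\bx - \widetilde{\bx}}$, and substituting into the opening triangle inequality produces the factor $2 + \epsilon_2$ as claimed.

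I expect no serious obstacle here; unlike the \textbf{Identify} and \textbf{Update} lemmas, this step uses neither the $\bD$-RIP nor any structural property of $\bA$. The only point demanding care is selecting the correct branch of the minimum in \eqref{eq:approxProj}: we must use the $\epsilon_2$ bound (relative to the optimal residual) rather than the $\epsilon_1$ bound, because the downstream recursion must be expressed in terms of $\norm{\bx - \widetilde{\bx}}$ rather than the projected energy. The key observation that makes the bound close without any isometry assumption is that the exact $k$-sparsity of $\bx$ renders $S$ an admissible support, so the optimal projection residual of $\widetilde{\bx}$ cannot exceed its distance to $\bx$.
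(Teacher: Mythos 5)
Your proposal is correct and follows essentially the same route as the paper's proof: the same triangle inequality through $\widetilde{\bx}$, the same detour through the optimal support $\Lambda_{\text{opt}}(\widetilde{\bx},k)$ with the $\epsilon_2$ branch of \eqref{eq:approxProj}, and the same final observation that the exact $k$-sparsity of $\bx$ makes the optimal residual at most $\norm{\widetilde{\bx}-\bx}$. Your version merely makes explicit (via the competitor support $S=\supp(\balpha)$) a step the paper states in one line.
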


Combining all four statements above, we have
\begin{align*}
\norm{\bx - \bx^{\ell+1}} &\le  (2+\epsilon_2) \norm{\bx - \widetilde{\bx}}\\
&\le (2+\epsilon_2) \sqrt{ \frac{1+\delta_{4k}}{1-\delta_{4k}}} \norm{\cP_{T^\bot} \bx} + \frac{4+2\epsilon_2}{\sqrt{1-\delta_{4k}}} \norm{\be} \\
&\le (2+\epsilon_2) \sqrt{ \frac{1+\delta_{4k}}{1-\delta_{4k}}} \norm{\cP_{\Omega^\bot} \bv} + \frac{4+2\epsilon_2}{\sqrt{1-\delta_{4k}}} \norm{\be} \\
&\le  (2+\epsilon_2) \sqrt{ \frac{1+\delta_{4k}}{1-\delta_{4k}}} ((2 + \epsilon_1) \delta_{4k} + \epsilon_1) \norm{\bv}  \\
& ~~~~~ +  (2+\epsilon_2) \sqrt{ \frac{1+\delta_{4k}}{1-\delta_{4k}}} (2+\epsilon_1) \sqrt{1 + \delta_{4k}} \norm{\be} + \frac{4+2\epsilon_2}{\sqrt{1-\delta_{4k}}} \norm{\be} \\
&= ((2 + \epsilon_1) \delta_{4k} + \epsilon_1) (2+\epsilon_2) \sqrt{ \frac{1+\delta_{4k}}{1-\delta_{4k}}} \norm{\bx - \bx^{\ell}} + \left( \frac{(2+\epsilon_2)\left((2+\epsilon_1) (1 + \delta_{4k}) + 2\right)}{\sqrt{1-\delta_{4k}}} \right) \norm{\be}.
\end{align*}
This completes the proof of Theorem~\ref{thm:exactSparse}.

\subsection{Proof of Lemma~\ref{lem:statement1}}
\label{sec:proofStatement1}

In order to prove the four main lemmas, we require two supplemental lemmas, the first of which is a direct consequence of the $\bD$-RIP.

\begin{lemma}[Consequence of $\bD$-RIP]\label{lem:opnorm}
For any index set $B$ and any vector $\bz \in \complex^n$,
$$
\norm{\cP_{B}\bA^*\bA\cP_{B}\bz - \cP_{B}\bz} \le \delta_{|B|}\norm{\bz}.
$$
\end{lemma}
\begin{proof}
We have
\begin{align*}
\delta_{|B|} & \geq \sup_{\bx:~ |B|-\mathrm{sparse~in}~\bD} \frac{| \norm{\bA \bx}^2 - \norm{\bx}^2 |}{\norm{\bx}^2} \\
& \geq \sup_{\bx} \frac{| \norm{\bA \cP_{B} \bx}^2 - \norm{ \cP_{B} \bx}^2 |}{\norm{ \cP_{B} \bx}^2} \\
& \geq \sup_{\bx} \frac{| \norm{\bA \cP_{B} \bx}^2 - \norm{ \cP_{B} \bx}^2 |}{\norm{ \bx}^2} \\
&= \sup_{\norm{\bx}=1} | \norm{\bA\cP_{B}\bx}^2 - \norm{\cP_{B}\bx}^2| \allowdisplaybreaks \\
&= \sup_{\norm{\bx}=1} |\langle \cP_{B}^*\bA^*\bA\cP_{B}\bx - \cP_{B}^*\cP_{B}\bx, \bx \rangle|\\
&= \sup_{\norm{\bx}=1} |\langle \cP_{B}\bA^*\bA\cP_{B}\bx - \cP_{B}\bx, \bx \rangle|\\
&= \norm{\cP_{B}\bA^*\bA\cP_{B} - \cP_{B}},
\end{align*}
where the third line follows because $\norm{\cP_{B} \bx} \le \norm{\bx}$ for all $\bx$, and the last line follows from 
the fact that $\cP_{B}\bA^*\bA\cP_{B} - \cP_{B}$ is self-adjoint.
\end{proof}

We'll also utilize an elementary fact about orthogonal projections.

\begin{lemma}\label{lem:nestedproj}
For any pair of index sets $A, B$ with $A \subset B$, $\cP_A = \cP_A \cP_B$.
\end{lemma}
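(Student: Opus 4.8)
The plan is to reduce this operator identity to the elementary nesting property of orthogonal projections onto nested subspaces, using the fact that the index inclusion $A \subset B$ forces an inclusion of the corresponding column spans. First I would establish the subspace containment $\cR(\bD_A) \subseteq \cR(\bD_B)$: since $A \subset B$, every column of $\bD_A$ appears among the columns of $\bD_B$, so any vector expressible as a linear combination of the columns of $\bD_A$ is also a linear combination of the columns of $\bD_B$; hence $\cR(\bD_A) \subseteq \cR(\bD_B)$. Writing $V := \cR(\bD_A)$ and $W := \cR(\bD_B)$, so that $\cP_A$ and $\cP_B$ are the orthogonal projections onto $V$ and $W$ respectively, this containment $V \subseteq W$ immediately yields the reverse containment of orthogonal complements, $W^\bot \subseteq V^\bot$.

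Next I would verify the identity pointwise. For an arbitrary $\bz \in \complex^n$, decompose $\bz = \cP_B \bz + (\bI - \cP_B) \bz$, where $\cP_B \bz \in W$ and $(\bI - \cP_B) \bz \in W^\bot$. Applying $\cP_A$ and using linearity gives $\cP_A \bz = \cP_A \cP_B \bz + \cP_A (\bI - \cP_B) \bz$. Because $(\bI - \cP_B)\bz \in W^\bot \subseteq V^\bot$, and $\cP_A$ annihilates every vector in $V^\bot = \cR(\bD_A)^\bot$, the second term vanishes, leaving $\cP_A \bz = \cP_A \cP_B \bz$. Since $\bz$ was arbitrary, $\cP_A = \cP_A \cP_B$, as claimed.

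There is no genuine obstacle here, as the result is a standard fact about nested orthogonal projections. The only point deserving any care is the very first step: the projectors $\cP_A$ and $\cP_B$ are defined through the dictionary $\bD$ rather than as coordinate projections, so one must explicitly note that the index-set inclusion $A \subset B$ translates into the subspace inclusion $\cR(\bD_A) \subseteq \cR(\bD_B)$. This translation would \emph{fail} for operators of the form $\cP_{\Lambda^\bot}$, which is precisely why the footnote in Section~\ref{sec:assumptions} cautions that $\cP_{\Lambda^\bot}$ is not the projection onto a complementary column span; keeping this distinction in mind is what makes the reduction legitimate. Once the column-span inclusion is in hand, the orthogonal decomposition argument above closes the proof.
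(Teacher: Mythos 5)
Your proof is correct. The paper states Lemma~\ref{lem:nestedproj} without proof, treating it as an elementary fact about orthogonal projections onto nested subspaces, and your argument---translating $A \subset B$ into $\cR(\bD_A) \subseteq \cR(\bD_B)$, then killing the $(\bI - \cP_B)\bz \in \cR(\bD_B)^\bot \subseteq \cR(\bD_A)^\bot$ component---is exactly the standard justification the authors had in mind; your remark about why this would fail for $\cP_{\Lambda^\bot}$ is a sensible extra precaution.
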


Now, to make the notation simpler, note that $\vtil = \bA^* \bA \bv + \bA^* \be$ and that $\Omega = \cS_{\bD}(\vtil,2k)$. Let $\Omega^\ast = \Lambda_{\text{opt}}(\vtil,2k)$ denote the optimal support of size $2k$ for approximating $\vtil$, and set $R = \cS_{\bD}(\bv, 2k)$. Using this notation we have
\begin{align}
\norm{\cP_{\Omega^\bot} \bv} & = \norm{ \bv - \cP_{\Omega} \bv } \nonumber \\
& \le \norm{\bv - \cP_{\Omega} \vtil} \nonumber \\
& = \norm{(\bv - \cP_{R \cup \Omega^*} \vtil) + (\cP_{R \cup \Omega^*} \vtil - \cP_{\Omega^*} \vtil) + (\cP_{\Omega^*} \vtil - \cP_{\Omega} \vtil)} \nonumber \\
& \le \norm{\bv - \cP_{R \cup \Omega^*} \vtil} + \norm{ \cP_{R \cup \Omega^*} \vtil - \cP_{\Omega^*} \vtil} + \norm{ \cP_{\Omega^*} \vtil - \cP_{\Omega} \vtil } \nonumber \\
& \le \norm{\bv - \cP_{R \cup \Omega^*} \bA^* \bA \bv} + \norm{\cP_{R \cup \Omega^*} \bA^* \be} + \norm{ \cP_{R \cup \Omega^*} \vtil - \cP_{\Omega^*} \vtil} + \norm{ \cP_{\Omega^*} \vtil - \cP_{\Omega} \vtil } , \label{eq:approxnoisylab1}
\end{align}
where the second line follows from the fact that $\cP_{\Omega} \bv$ is the nearest neighbor to $\bv$ among all vectors in $\cR(\bD_\Omega)$ and the fourth and fifth lines use the triangle inequality.

Below, we will provide bounds on the first and second terms appearing in~\eqref{eq:approxnoisylab1}. To deal with the third term in~\eqref{eq:approxnoisylab1}, note that for any $\Pi$ which is a subset of $R \cup \Omega^*$, we can write
\begin{align*}
\vtil - \cP_{\Pi} \vtil & = (\vtil - \cP_{R \cup \Omega^*} \vtil) + (\cP_{R \cup \Omega^*} \vtil - \cP_{\Pi} \vtil),
\end{align*}
where $\vtil - \cP_{R \cup \Omega^*} \vtil$ is orthogonal to $\cR(\bD_{R \cup \Omega^*})$, and $\cP_{R \cup \Omega^*} \vtil - \cP_{\Pi} \vtil$ is contained in $\cR(\bD_{R \cup \Omega^*})$. Thus we can write
\begin{align*}
\norm{\vtil - \cP_{\Pi} \vtil}^2 & = \norm{\vtil - \cP_{R \cup \Omega^*} \vtil}^2 + \norm{\cP_{R \cup \Omega^*} \vtil - \cP_{\Pi} \vtil}^2.
\end{align*}
Recall that over all index sets $\Pi$ with $|\Pi| = 2k$, $\norm{\vtil - \cP_{\Pi} \vtil}$ is minimized by choosing $\Pi = \Omega^*$. Thus, over all $\Pi$ which are subsets of $R \cup \Omega^*$ with $|\Pi| = 2k$, $\norm{\cP_{R \cup \Omega^*} \vtil - \cP_{\Pi} \vtil}^2$ must be minimized by choosing $\Pi = \Omega^*$. In particular, we have the first inequality below:
\begin{align}
\norm{ \cP_{R \cup \Omega^*} \vtil - \cP_{\Omega^*} \vtil} &\le \norm{ \cP_{R \cup \Omega^*} \vtil - \cP_{R} \vtil} \nonumber \\
&= \norm{ \cP_{R \cup \Omega^*} \vtil - \cP_{R} \cP_{R \cup \Omega^*} \vtil} \nonumber \\
& \le \norm{ \cP_{R \cup \Omega^*} \vtil - \cP_{R} (\bv + \bA^* \be)} \nonumber \\
& = \norm{ (\cP_{R \cup \Omega^*} \bA^* \bA \bv - \bv) + (\cP_{R \cup \Omega^*} \bA^* \be - \cP_{R} \bA^* \be) } \nonumber \\
& \le \norm{ \cP_{R \cup \Omega^*} \bA^* \bA \bv - \bv} + \norm{\cP_{R \cup \Omega^*} \bA^* \be - \cP_{R} \bA^* \be} \nonumber \\
& \le \norm{ \cP_{R \cup \Omega^*} \bA^* \bA \bv - \bv} + \norm{(I - \cP_R) \cP_{R \cup \Omega^*} \bA^* \be} \nonumber \\
& \le \norm{ \cP_{R \cup \Omega^*} \bA^* \bA \bv - \bv} + \norm{\cP_{R \cup \Omega^*} \bA^* \be}. \label{eq:approxnoisylab3}
\end{align}
The second line above uses Lemma~\ref{lem:nestedproj}, the third line follows from the fact that $\cP_{R} \cP_{R \cup \Omega^*} \vtil$ must be the nearest neighbor to $\cP_{R \cup \Omega^*} \vtil$ among all vectors in $\cR(\bD_R)$,  the fourth line uses the fact that $\cP_{R} \bv = \bv$ because $R = \cS_{\bD}(\bv, 2k)$ and both $\bx$ and $\bx^{\ell}$ are $k$-sparse in $\bD$, the fifth line uses the triangle inequality, the sixth line uses Lemma~\ref{lem:nestedproj}, and the seventh line follows from the fact that $(I - \cP_R)$ is an orthogonal projection and hence has norm bounded by 1.

To deal with the fourth term in~\eqref{eq:approxnoisylab1}, note that from the definition of $\Omega^*$ and from \eqref{eq:approxProj}, we have the first inequality below:
\begin{align}
\norm{ \cP_{\Omega^*} \vtil - \cP_{\Omega} \vtil }
&\le \epsilon_1 \norm{ \cP_{\Omega^*} \vtil } \nonumber \\
&= \epsilon_1 \norm{ \cP_{\Omega^*} (\bA^* \bA \bv + \bA^* \be) } \nonumber \\
&\le \epsilon_1 \norm{ \cP_{\Omega^*} \bA^* \bA \bv} + \epsilon_1 \norm{ \cP_{\Omega^*} \bA^* \be } \nonumber \\
&= \epsilon_1 \norm{ \cP_{\Omega^*} \cP_{R \cup \Omega^*} \bA^* \bA  \bv} + \epsilon_1 \norm{ \cP_{\Omega^*} \cP_{R \cup \Omega^*} \bA^* \be } \nonumber \\
&\le \epsilon_1 \norm{ \cP_{R \cup \Omega^*} \bA^* \bA  \bv} + \epsilon_1 \norm{ \cP_{R \cup \Omega^*} \bA^* \be } \nonumber \\
&\le \epsilon_1 \norm{ \bv} + \epsilon_1 \norm{ \cP_{R \cup \Omega^*} \bA^* \bA  \bv - \bv} + \epsilon_1 \norm{ \cP_{R \cup \Omega^*} \bA^* \be },  \label{eq:approxnoisylab2}
\end{align}
The third line above uses the triangle inequality, the fourth line uses Lemma~\ref{lem:nestedproj}, the fifth line uses the fact $\cP_{\Omega^*}$ is an orthogonal projection and hence has norm bounded by 1, and the sixth line uses the triangle inequality.

Combining~\eqref{eq:approxnoisylab1}, \eqref{eq:approxnoisylab3}, and \eqref{eq:approxnoisylab2} we see that
$$
\norm{\cP_{\Omega^\bot} \bv} \le (2+\epsilon_1) \norm{\cP_{R \cup \Omega^*} \bA^* \bA \bv - \bv} + (2+\epsilon_1) \norm{\cP_{R \cup \Omega^*} \bA^* \be} + \epsilon_1 \norm{ \bv}.
$$
Since $\bv \in \cR(\bD_R)$, it follows that $\bv \in \cR(\bD_{R \cup \Omega^*})$, and so
\begin{align*}
\norm{ \cP_{R \cup \Omega^*} \bA^* \bA \bv - \bv } =  \norm{ \cP_{R \cup \Omega^*} \bA^* \bA \cP_{R \cup \Omega^*} \bv - \cP_{R \cup \Omega^*} \bv } \le \delta_{4k} \norm{\bv},
\end{align*}
where we have used Lemma~\ref{lem:opnorm} to get the inequality above. In addition, we know that the operator norm of $\cP_{R \cup \Omega^*} \bA^*$ satisfies
$$
\norm{\cP_{R \cup \Omega^*} \bA^*} = \norm{(\cP_{R \cup \Omega^*} \bA^*)^\ast} = \norm{\bA \cP_{R \cup \Omega^*}} \le \sqrt{1 + \delta_{4k}},
$$
which follows from the $\bD$-RIP.  Specifically, for any $\bx$,
$$
\frac{\norm{\bA \cP_{R \cup \Omega^*} \bx}}{\norm{\bx}} \le \frac{\norm{\bA \cP_{R \cup \Omega^*} \bx}}{\norm{\cP_{R \cup \Omega^*} \bx}} \le \sqrt{1 + \delta_{4k}}.
$$
Putting all of this together, we have
$$
\norm{\cP_{\Omega^\bot} \bv} \le ((2 + \epsilon_1) \delta_{4k} + \epsilon_1) \norm{\bv} + (2+\epsilon_1) \sqrt{1 + \delta_{4k}} \norm{\be}.
$$

\subsection{Proof of Lemma~\ref{lem:statement2}}
\label{sec:proofStatement2}

First note that by the definition of $T$, $\bx^{\ell} \in \cR(\bD_T)$, and hence $\cP_{T^{\bot}} \bx^{\ell} = 0$.  Thus we can write,
$$
\norm{\cP_{T^{\bot}} \bx} = \norm{\cP_{T^{\bot}} (\bx - \bx^{\ell}) } = \norm{\cP_{T^{\bot}} \bv }.
$$
Finally, since $\Omega \subseteq T$, we have that
$$
\norm{\cP_{T^{\bot}} \bv } \le \norm{\cP_{\Omega^{\bot}} \bv }.
$$

\subsection{Proof of Lemma~\ref{lem:statement3}}
\label{sec:proofStatement3}

To begin, we note that $\bx - \widetilde{\bx}$ has a $4k$-sparse representation in $\bD$, thus, applying the $\bD$-RIP (of order $4k$) we have
$$
\norm{\bx - \widetilde{\bx}} \le \frac{ \norm{ \bA \bx - \bA \widetilde{\bx} }}{\sqrt{1-\delta_{4k}}}.
$$
By construction,
$$
\norm{\bA \bx - \bA \widetilde{\bx} + \be} \le \norm{\bA \bx - \bA \bz + \be}
$$
for any $\bz \in \cR(\bD_T)$, in particular for $\bz = \cP_{T} \bx$. Thus,
\begin{align*}
\norm{\bx - \widetilde{\bx}} & \le \frac{ \norm{ \bA \bx - \bA \widetilde{\bx} }}{\sqrt{1-\delta_{4k}}} \\
& \le \frac{ \norm{ \bA \bx - \bA \widetilde{\bx} + \be } + \norm{\be} }{\sqrt{1-\delta_{4k}}} \\
& \le \frac{ \norm{\bA \bx - \bA \cP_{T} \bx + \be} + \norm{\be} }{\sqrt{1-\delta_{4k}}} \\
& \le \frac{ \norm{\bA \bx - \bA \cP_{T} \bx} + 2 \norm{\be} }{\sqrt{1-\delta_{4k}}}
\end{align*}
where the second and fourth lines use the triangle inequality. By applying the $\bD$-RIP we obtain
$$
\norm{ \bA \bx - \bA \cP_{T} \bx } \le \sqrt{1+\delta_{4k}} \norm{\bx - \cP_{T} \bx} = \sqrt{1+\delta_{4k}} \norm{\cP_{T^\bot} \bx}.
$$
Combining all of this,
$$
\norm{\bx - \widetilde{\bx}} \le \frac{ \sqrt{1+\delta_{4k}} \norm{\cP_{T^\bot} \bx} + 2 \norm{\be} }{\sqrt{1-\delta_{4k}}}.
$$

\subsection{Proof of Lemma~\ref{lem:statement4}}
\label{sec:proofStatement4}

Using the triangle inequality, we have
$$
\norm{\bx - \bx^{\ell+1}} = \norm{\bx - \widetilde{\bx} + \widetilde{\bx} - \bx^{\ell+1}} \le \norm{\bx - \widetilde{\bx}} + \norm{\widetilde{\bx} - \bx^{\ell+1}}.
$$
Recall that $\Gamma = \cS_{\bD}(\widetilde{\bx},k)$ and $\bx^{\ell+1} = \cP_{ \Gamma } \widetilde{\bx}$. Let $\Gamma^* = \Lambda_{\text{opt}}(\widetilde{\bx},k)$ denote the optimal support of size $k$ for approximating $\widetilde{\bx}$. Then we can write
\begin{align*}
\norm{\widetilde{\bx} - \bx^{\ell+1}} &\le \norm{\widetilde{\bx} - \cP_{\Gamma^*} \widetilde{\bx}} + \norm{\cP_{\Gamma^*} \widetilde{\bx} - \cP_{\Gamma} \widetilde{\bx}} \\
&\le  \norm{\widetilde{\bx} - \cP_{\Gamma^*} \widetilde{\bx}} + \epsilon_2 \norm{\widetilde{\bx} - \cP_{\Gamma^*} \widetilde{\bx}},
\end{align*}
where the first line follows from the triangle inequality, and the second line uses \eqref{eq:approxProj}. Combining all of this, we have
\begin{align*}
\norm{\bx - \bx^{\ell+1}} &\le \norm{\bx - \widetilde{\bx}} + (1+\epsilon_2) \norm{\widetilde{\bx} - \cP_{\Gamma^*} \widetilde{\bx}} \\
&\le \norm{\bx - \widetilde{\bx}} + (1+\epsilon_2) \norm{\widetilde{\bx} - \bx} \\ &= (2+\epsilon_2) \norm{\bx - \widetilde{\bx}},
\end{align*}
where the second line follows from the fact that $\cP_{\Gamma^*} \widetilde{\bx}$ is the nearest neighbor to $\widetilde{\bx}$ among all vectors having a $k$-sparse representation in $\bD$.

\bibliographystyle{plain}
\footnotesize
\bibliography{preamble,cosampBib}

\end{document}